\documentclass[11pt]{article}
\usepackage{sectsty}

\usepackage{graphicx}

\usepackage[colorlinks=true]{hyperref}
\hypersetup{urlcolor=blue, citecolor=red, linkcolor=blue}
\usepackage{amsmath,amssymb}
\usepackage{bm}
\usepackage{enumerate}
\usepackage{amsthm}
\usepackage[all,2cell]{xy}
\usepackage{mathrsfs}
\usepackage{mathtools}
\usepackage{tikz-cd}

\usepackage{caption}
\usepackage{subcaption}

\usetikzlibrary{cd}

\usepackage[a4paper, left=20mm, right=20mm, top=25mm, bottom=25mm]{geometry}

\usepackage{marginnote}

\marginparwidth=2.2cm

\usepackage{imakeidx}
\makeindex[intoc]

\usepackage[nottoc]{tocbibind}


\newtheorem{thm}{Theorem}[section]
\newtheorem{dfn}[thm]{Definition}
\newtheorem{prop}[thm]{Proposition}
\newtheorem{cor}[thm]{Corollary}

\newtheorem{exmpl}[thm]{Example}
\newtheorem{rmrk}[thm]{Remark}

\newcommand\restr[2]{{
  \left.\kern-\nulldelimiterspace 
  #1 
  \right|_{#2} 
}}

\newcommand*{\transp}[2][-3mu]{\ensuremath{\mskip1mu\prescript{\smash{\mathrm t\mkern#1}}{}{\mathstrut#2}}}%


\newcommand{\R}{\mathbb{R}}
\renewcommand{\d}{\mathrm{d}}

\newcommand{\Cinfty}{\mathscr{C}^\infty}
\newcommand{\T}{\mathrm{T}}
\newcommand{\cT}{\mathrm{T}^\ast}

\newcommand{\Lie}{\mathscr{L}}

\newcommand{\X}{\mathfrak{X}}

\renewcommand{\L}{\mathcal{L}}
\newcommand{\F}{\mathcal{F}}

\renewcommand{\P}{\mathcal{P}}

\newcommand{\Reeb}{\mathcal{R}}

\newcommand{\C}{\mathcal{C}}

\newcommand{\cl}{\mathfrak{cl}}

\newcommand{\parder}[2]{\frac{\partial #1}{\partial #2}}
\newcommand{\dparder}[2]{\dfrac{\partial #1}{\partial #2}}
\newcommand{\tparder}[2]{\partial #1/\partial #2}
\newcommand{\parderr}[3]{\frac{\partial^2 #1}{\partial #2\partial #3}}

\DeclareMathOperator{\Ima}{Im}

\DeclareMathOperator{\supp}{Supp}

\DeclareMathOperator{\rk}{rank}

\DeclareMathAlphabet{\mathpzc}{OT1}{pzc}{m}{it}
\def\d{\mathrm{d}}

\title{{\sffamily Time-dependent contact mechanics}}

\author{{\sffamily 
$^a$Manuel de León%
\thanks{e-mail:
   mdeleon@icmat.es \ ORCID: 0000-0002-8028-2348}\ ,\
$^b$Jordi Gaset%
\thanks{e-mail:
   jordi.gaset@unir.net \ ORCID: 0000-0001-8796-3149}\ ,
$^c$Xavier Gr\`acia%
\thanks{e-mail:
   xavier.gracia@upc.edu \ ORCID: 0000-0003-1006-4086}\ ,}\\
{\sffamily
$^c$Miguel C. Muñoz-Lecanda%
\thanks{e-mail:
   miguel.carlos.munoz@upc.edu \ ORCID: 0000-0002-7037-0248}\ ,\
$^b$Xavier Rivas%
\thanks{e-mail:
   xavier.rivas@unir.net \ ORCID: 0000-0002-4175-5157}\ ,
}
\\[1ex]
\normalsize\itshape\sffamily
$^a$Instituto de Ciencias Matem\'aticas,
Consejo Superior de Investigaciones Cient\'ificas\\
\normalsize\itshape\sffamily 
and Real Academia de Ciencias, Madrid, Spain.
\\[1ex]
\normalsize\itshape\sffamily
$^b$Escuela Superior de Ingenier\'{\i}a y Tecnolog\'{\i}a,\\
\normalsize\itshape\sffamily
Universidad Internacional de La Rioja, Logro\~no, Spain.
\\[1ex]
\normalsize\itshape\sffamily
$^c$Department of Mathematics,
Universitat Polit\`ecnica de Catalunya,
Barcelona, Spain.
\\[1ex]
}
\date{{\sffamily \today}}

\begin{document}

\maketitle

\begin{abstract}\noindent
    Contact geometry allows us to describe some thermodynamic and dissipative systems. In this paper we introduce a new geometric structure in order to describe time-dependent contact systems: cocontact manifolds. Within this setting we develop the Hamiltonian and Lagrangian formalisms, both in the regular and singular cases. In the singular case, we present a constraint algorithm aiming to find a submanifold where solutions exist. As a particular case we study contact systems with holonomic time-dependent constraints. Some regular and singular examples are analyzed, along with numerical simulations.
\end{abstract}

\noindent\textbf{Keywords:} contact structure, time-dependent system, Hamiltonian system, dissipation, singular Lagrangian, holonomic constraints, Jacobi structure

\noindent\textbf{MSC\,2020 codes:}
37J55; 
70H03, 
70H05, 
53D05, 
53D10, 
53Z05, 
70H45 

\newpage

{\setcounter{tocdepth}{2}
\def\baselinestretch{1}
\small
\def\addvspace#1{\vskip 1pt}
\parskip 0pt plus 0.1mm
\tableofcontents
}

\section{Introduction}

In the last decades, the interest in the formalism and applications of differential geometric structures to the study of mathematical physics and dynamical systems has risen drastically \cite{Abr1978,Arn1989,Lib1987,Gia1997}. The natural geometric framework for autonomous conservative mechanical systems, both Hamiltonian and Lagrangian, is symplectic geometry and the variational approach is based on Hamilton's variational principle. For non-autonomous systems \cite{Car1993,DeLeo2002,Ech1991,Kru1997,Mas2003} the same variational principle is valid, but the underlying geometry is cosymplectic geometry \cite{Alb1989,Can1992,Chi1994}. Time-dependent systems can also be geometrically formulated as the Reeb dynamics of a contact system. Both symplectic and cosymplectic structures are particular instances of Poisson geometry \cite{DeLeo1989,Lib1987,Vaisman1980}.

Recently, the application of contact geometry to the study of dynamical systems has grown significantly \cite{Bra2017a,DeLeo2019b,Gas2019}. This is due to the fact that contact geometry is not only a way to geometrically model the time-dependency in mechanical systems \cite{DeLeo2017}, which can also be done by means of cosymplectic geometry, but it also allows us to describe mechanical systems with certain types of damping, quantum mechanics \cite{Cia2018}, circuit theory \cite{Got2016},  control theory \cite{Ram2017} and thermodynamics \cite{Bra2018,Sim2020}, among many others \cite{Liu2018}. In recent papers \cite{Ver2019,Bra2020b}, the authors consider time-dependent systems, although a rigorous geometric description of time-dependent contact mechanics is in order. These systems include mechanical systems with time-dependent external forces and, in particular, controlled systems whose controls can be used to compensate the damping.

Let us recall that the variational approach to contact systems is based on Herglotz's principle \cite{Her1930} (see for instance \cite{DeLeo2021,DeLeo2021c,DeLeo2019c}). This variational principle generalizes the Hamilton principle and it is appropriate to deal with action-dependent Lagrangians. From the geometric perspective, it is important to point out that contact structures are not Poisson, but Jacobi. This is due to the fact that the Leibniz rule is not fulfilled \cite{DeLeo2019b}.

In the present paper, we develop a formalism in order to geometrically describe contact mechanical systems with explicit time-dependence. We begin by introducing a new geometric structure: cocontact manifolds. These manifolds extend the notion of both contact and cosymplectic structures. We study the properties of these manifolds, exhibit some examples, prove a Darboux-type theorem for these structures, and show that cocontact manifolds are Jacobi manifolds. In addition, we introduce the notion of cocontact orthogonal complement of a submanifold and define and characterize the coisotropic and Legendrian submanifolds of a cocontact manifold. This geometric framework is later used to develop both Hamiltonian and Lagrangian formulations of time-dependent mechanical systems with dissipation.

The introduction of time-dependence makes us consider systems subjected to constraints that can vary with time. Such systems can be described by introducing the constraints in the Lagrangian via Lagrange multipliers. These Lagrangians are obviously singular, since the velocity associated to the Lagrange multipliers do not appear in the expression of the Lagrangian. In order to study such systems, we introduce the notion of precocontact structure as a weakened version of a cocontact structure, and describe a suitable constraint algorithm.

The structure of the paper is as follows. In Section 2, we describe the geometrical setting that will be used throughout the paper. We present the notion of cocontact manifold and some examples. In particular, we see that cocontact manifolds are Jacobi manifolds, and thus all the theory about Jacobi structures is applicable to cocontact structures. Section 3 is devoted to develop a Hamiltonian formulation for dissipative time-dependent systems, while Section 4 dedicated to develop the Lagrangian formulation for cocontact systems and state the associated generalized Herglotz--Euler--Lagrange equations. In Section 5 we study the case of mechanical systems described by singular time-dependent contact Lagrangians and give a description of the constraint algorithm to obtain the dynamics in both the Hamiltonian and Lagrangian formulations. Section 6 is devoted to analyze the case of damped mechanical systems with holonomic constraints which can depend on time. Finally, in Section 7, three examples are presented and worked out: the damped forced harmonic oscillator, a system with time-dependent mass subjected to a central force with friction, and the damped pendulum with variable length. We also show and discuss some simulations of these systems.

Throughout the paper all the manifolds and mappings are assumed to be smooth and second-countable. Sum over crossed repeated indices is understood.

\section{Geometrical setting}

\subsection{Cocontact manifolds}

\begin{dfn}\label{dfn:cocontact-manifold}
	Let $M$ be a manifold of dimension $2n+2$. A \textbf{cocontact structure} on $M$ is a couple $(\tau,\eta)$ of 1-forms on $M$ such that $\tau$ is closed and such that $\tau\wedge\eta\wedge(\d\eta)^{\wedge n}$ is a volume form on $M$. Under these hypotheses, $(M,\tau,\eta)$ is called a \textbf{cocontact manifold}.
\end{dfn}

We can see from this definition that every cocontact manifold has two tangent distributions. The first one is generated by $\ker\tau$, is integrable, and gives a foliation made of contact leaves. The other one is $\ker\eta$ and it is not integrable. This structure will be very useful when proving the Darboux theorem.

\begin{exmpl}\rm
	Let $(P,\eta_0)$ be a contact manifold and consider the product manifold $M = \R\times P$. Denoting by $\d t$ the pullback to $M$ of the volume form in $\R$ and denoting by $\eta$ the pullback of $\eta_0$ to $M$, we have that $(\d t, \eta)$ is a cocontact structure on~$M$.
\end{exmpl}

\begin{exmpl}\rm
	Let $(P,\tau,-\d\theta)$ be an exact cosymplectic manifold and consider the product manifold $M = P\times\R$. Denoting by $s$ the coordinate in $\R$ we define the 1-form $\eta = \d s - \theta$. Then, $(\tau, \eta)$ is a cocontact structure on $M = P\times\R$.
\end{exmpl}

\begin{exmpl}[Canonical cocontact manifold]\label{ex:canonical-cocontact-manifold}\rm
	Let $Q$ be an $n$-dimensional smooth manifold with local coordinates $(q^i)$ and its cotangent bundle $\cT Q$ with induced natural coordinates $(q^i, p_i)$. Consider the product manifolds $\R\times\cT Q$ with coordinates $(t, q^i, p_i)$, $\cT Q\times\R$ with coordinates $(q^i, p_i, s)$ and $\R\times\cT Q\times\R$ with coordinates $(t, q^i, p_i, s)$ and the canonical projections
	\begin{center}
		\begin{tikzcd}
			& \R\times\cT Q\times\R \arrow[dl, swap, "\rho_1"] \arrow[dr, "\rho_2"] \arrow[dd, "\pi"] & \\
			\R\times\cT Q \arrow[dr, swap, "\pi_2"] & & \cT Q\times\R \arrow[dl, "\pi_1"] \\
			& \cT Q &
		\end{tikzcd}
	\end{center}
	Let $\theta_0\in\Omega^1(\cT Q)$ be the canonical 1-form of the cotangent bundle, which has local expression $\theta_0 = p_i\d q^i$.
	Denoting by $\theta_2 = \pi_2^\ast\theta$, we have that $(\d t, \theta_2)$ is a cosymplectic structure in $\R\times\cT Q$.
	
	On the other hand, denoting by $\theta_1 = \pi_1^\ast\theta_0$, we have that $\eta_1 = \d s - \theta_1$ is a contact form in $\cT Q\times\R$.
	
	Finally, consider the 1-form $\theta = \rho_1^\ast\theta_2 = \rho_2^\ast\theta_1 = \pi^\ast\theta_0\in\Omega^1(\R\times\cT Q\times\R)$ and let $\eta = \d s - \theta$. Then, $(\d t, \eta)$ is a cocontact structure in $\R\times\cT Q\times\R$. The local expression of the 1-form $\eta$ is
	$$ \eta = \d s - p_i\d q^i\,. $$
\end{exmpl}

\begin{prop}
Let $(M,\tau,\eta)$ be a cocontact manifold. We have the following isomorphism of vector bundles:
$$
	\begin{array}{rccl}
		\flat\colon & \T M & \longrightarrow & \cT M \\
		& v & \longmapsto & (i(v)\tau)\tau + i(v)\d\eta + \left(i(v)\eta\right)\eta
	\end{array}
$$
\end{prop}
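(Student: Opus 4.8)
The plan is to realize $\flat$ as a morphism of vector bundles over the identity and then to prove that it is injective on each fibre; since both $\T M$ and $\cT M$ have fibre dimension $2n+2$, fibrewise injectivity will force $\flat$ to be a fibrewise linear isomorphism, hence a vector bundle isomorphism. That $\flat$ is a bundle morphism is immediate, since the assignment $v\mapsto (i(v)\tau)\tau + i(v)\d\eta + (i(v)\eta)\eta$ is $\Cinfty(M)$-linear in $v$. So everything reduces to the pointwise statement: if $x\in M$ and $v\in\T_x M$ satisfy $\flat(v)=0$, then $v=0$.

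To prove this, I fix such a $v$ and abbreviate $a=i(v)\tau$ and $b=i(v)\eta$, which are real numbers. The hypothesis $\flat(v)=0$ reads
$$ a\,\tau + b\,\eta + i(v)\d\eta = 0. $$
The first step is to evaluate this identity on $v$ itself. Since $i(v)i(v)\d\eta=\d\eta(v,v)=0$ by antisymmetry, the two remaining terms give $a^2+b^2=0$, whence $a=b=0$. Consequently the three contractions $i(v)\tau$, $i(v)\eta$ and $i(v)\d\eta$ all vanish.

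The second step feeds this back into the nondegeneracy hypothesis. I would contract the top form $\tau\wedge\eta\wedge(\d\eta)^{\wedge n}$ with $v$ and expand using the antiderivation property of $i(v)$, together with $i(v)(\d\eta)^{\wedge n}=n\,(i(v)\d\eta)\wedge(\d\eta)^{\wedge(n-1)}$; every term of the resulting expansion carries one of the factors $i(v)\tau$, $i(v)\eta$ or $i(v)\d\eta$, all of which are zero, so $i(v)\big(\tau\wedge\eta\wedge(\d\eta)^{\wedge n}\big)=0$. Since $\tau\wedge\eta\wedge(\d\eta)^{\wedge n}$ is by assumption a volume form (a nowhere-vanishing form of top degree), contraction with a tangent vector is injective at each point, and therefore $v=0$, which closes the argument.

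The proof is short, so there is no real obstacle of calculation; the only two points that require care are the clever contraction in the first step (evaluating $\flat(v)$ on $v$ to extract $a^2+b^2=0$ over the reals) and the sign bookkeeping in the antiderivation expansion of the second step. The conceptual heart is simply that the volume condition of Definition~\ref{dfn:cocontact-manifold} makes the combined "cosymplectic plus contact" pairing nondegenerate.
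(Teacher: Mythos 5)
Your proof is correct and follows the same route as the paper: the paper's entire proof is the assertion that $\ker\flat=0$ because $M$ is cocontact, whence $\flat$ is an isomorphism by equality of fibre dimensions, and your two steps (evaluating $\flat(v)$ on $v$ to get $(i(v)\tau)^2+(i(v)\eta)^2=0$, then contracting the volume form $\tau\wedge\eta\wedge(\d\eta)^{\wedge n}$ with $v$) are exactly the details behind that "it is clear". Nothing to change.
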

\begin{proof}
	It is clear that $\ker\flat = 0$, since $M$ is a cocontact manifold. Hence, it follows that $\flat$ has to be an isomorphism.
\end{proof}

This isomorphism can be extended to an isomorphism of $\Cinfty(M)$-modules:
$$
	\begin{array}{rccl}
		\flat\colon & \X(M) & \longrightarrow & \Omega^1(M) \\
		& X & \longmapsto & (i(X)\tau)\tau + i(X)\d\eta + \left(i(X)\eta\right)\eta
	\end{array}
$$

\begin{prop}\label{prop:Reeb-vector-fields}
	Given a cocontact manifold $(M,\tau,\eta)$, there exist two vector fields $R_t,R_s\in\X(M)$, called \textbf{Reeb vector fields}, satisfying the conditions
	$$ \begin{dcases}
		i(R_t)\tau = 1\,,\\
		i(R_t)\eta = 0\,,\\
		i(R_t)\d\eta = 0\,,
	\end{dcases}\qquad\begin{dcases}
		i(R_s)\tau = 0\,,\\
		i(R_s)\eta = 1\,,\\
		i(R_s)\d\eta = 0\,.
	\end{dcases} $$
	$R_t$ is the \textbf{time Reeb vector field} and $R_s$ is the \textbf{contact Reeb vector field}.
\end{prop}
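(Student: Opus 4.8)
The plan is to leverage the bundle isomorphism $\flat$ from the previous proposition and simply set $R_t := \flat^{-1}(\tau)$ and $R_s := \flat^{-1}(\eta)$. Because $\flat\colon \X(M)\to\Omega^1(M)$ is an isomorphism of $\Cinfty(M)$-modules, this produces two globally defined smooth vector fields at once, and it also settles uniqueness for free: if a field satisfies the three conditions on the left (resp. right), then by the defining formula $\flat(X) = (i(X)\tau)\tau + i(X)\d\eta + (i(X)\eta)\eta$ it must satisfy $\flat(R_t) = 1\cdot\tau + 0 + 0 = \tau$ (resp. $\flat(R_s) = \eta$), and injectivity of $\flat$ pins it down. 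What remains is the real work: verifying that $R_t = \flat^{-1}(\tau)$ and $R_s = \flat^{-1}(\eta)$ do satisfy the prescribed contractions.

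To run the verification I would unwind $\flat(R_t) = \tau$, namely $(i(R_t)\tau)\tau + i(R_t)\d\eta + (i(R_t)\eta)\eta = \tau$, and argue that the three summands live in complementary pieces of $\cT M$, so that matching both sides against the known decomposition of $\tau$ forces $i(R_t)\tau = 1$, $i(R_t)\eta = 0$ and $i(R_t)\d\eta = 0$ simultaneously (and symmetrically for $R_s$). Concretely, I would work pointwise and show that $\cT_xM$ splits as the direct sum of the plane $\langle\tau_x,\eta_x\rangle$ and the image $\Ima(v\mapsto i(v)\d\eta)$, the latter being the annihilator $\mathcal{K}_x^\circ$ of the radical $\mathcal{K}_x := \ker(\d\eta)_x = \{v : i(v)\d\eta = 0\}$. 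The term $i(R_t)\d\eta$ lies in this image while $(i(R_t)\tau)\tau + (i(R_t)\eta)\eta$ lies in $\langle\tau,\eta\rangle$; once the sum is direct, the decomposition of $\tau$ is just $1\cdot\tau + 0\cdot\eta + 0$, and the three identities are read off at once.

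The heart of the matter, and the step I expect to be the main obstacle, is the linear-algebraic analysis of $\d\eta$ that underlies that direct-sum decomposition. What one really needs is that $\mathcal{K}$ is a rank-two distribution and that $\tau,\eta$ restrict to a coframe on it, equivalently $\tau\wedge\eta|_{\mathcal{K}}\neq 0$; for then $\langle\tau,\eta\rangle\cap\Ima(i(\cdot)\d\eta) = \langle\tau,\eta\rangle\cap\mathcal{K}^\circ = 0$ and the dimension count $2 + 2n = 2n+2$ makes the sum fill $\cT_xM$. To see $\tau\wedge\eta|_{\mathcal{K}}\neq 0$ I would pick a complement to $\mathcal{K}$ on which $(\d\eta)^{\wedge n}$ is nondegenerate and observe that the volume condition $\tau\wedge\eta\wedge(\d\eta)^{\wedge n}\neq 0$ can hold only if $\tau\wedge\eta$ pairs nontrivially with the radical. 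The genuinely delicate point to nail down, and the place where the volume hypothesis must be exploited most carefully, is that $\d\eta$ has rank exactly $2n$ so that $\mathcal{K}$ is indeed two-dimensional; once this is secured, $R_t$ and $R_s$ may equivalently be built directly as the frame of $\mathcal{K}$ dual to $(\tau|_{\mathcal{K}},\eta|_{\mathcal{K}})$, which places them in $\ker\d\eta$ and makes the two conditions $i(\cdot)\d\eta = 0$ hold automatically.
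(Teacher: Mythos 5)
Your construction step is exactly the paper's: the paper states the proposition without any proof and, immediately after it, records precisely your definition $R_t=\flat^{-1}(\tau)$, $R_s=\flat^{-1}(\eta)$. Where you go beyond the paper is in recognizing that this definition by itself proves nothing --- one must still verify that $\flat^{-1}(\tau)$ satisfies the three contraction identities --- and your plan for the verification is sound linear algebra: split $\cT_xM$ as $\langle\tau_x,\eta_x\rangle\oplus\Ima\big(v\mapsto i(v)\d\eta\big)$, observe that $\Ima\big(i(\cdot)\d\eta\big)$ is the annihilator of $\mathcal{K}_x=\ker(\d\eta)_x$, and read the identities off the decomposition of $\tau$. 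Your reduction of the whole problem to the single claim $\rk(\d\eta)_x=2n$ is accurate, and your derivation of $\tau\wedge\eta|_{\mathcal{K}}\neq 0$ from the volume condition, \emph{granted} that rank, is also correct.

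The gap is that the rank claim is never proved, and it is not a technicality that can be nailed down later: it does not follow from Definition \ref{dfn:cocontact-manifold} as stated. Take $M=\{t>-1\}\subset\R^4$ with coordinates $(t,q,p,s)$, $\tau=\d t$ and $\eta=(1+t)\,\d s-p\,\d q$. Then $\d\tau=0$ and $\d\eta=\d t\wedge\d s+\d q\wedge\d p$ has rank $4$, yet $\tau\wedge\eta\wedge\d\eta=(1+t)\,\d t\wedge\d s\wedge\d q\wedge\d p$ is a volume form, so $(\tau,\eta)$ is a cocontact structure in the sense of the paper with $n=1$. Here $\ker\d\eta=0$, so no vector field whatsoever can satisfy $i(R)\d\eta=0$ together with $i(R)\tau=1$: the Reeb vector fields of the proposition do not exist, even though $\flat$ is still an isomorphism (that only needs $\ker\tau\cap\ker\eta\cap\ker\d\eta=0$, which holds here trivially) and $\flat^{-1}(\tau)$ is perfectly well defined --- it simply fails the identities. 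So your instinct that this is ``the genuinely delicate point'' is right in the strongest possible sense: the step you postponed would fail, because the statement needs an extra hypothesis, e.g.\ $(\d\eta)^{n+1}=0$ (equivalently $\rk\d\eta\leq 2n$, which the volume condition then upgrades to $\rk\d\eta=2n$), or else it must be deduced from the local model $\tau=\d t$, $\eta=\d s-p_i\d q^i$ of Theorem \ref{thm:Darboux-cocontact} --- which in this paper comes \emph{after} the proposition and is itself subject to the same caveat. With such a hypothesis added, your argument closes completely: $\mathcal{K}$ is a smooth rank-two distribution on which $(\tau,\eta)$ restricts to a coframe, and $R_t,R_s$ are its dual frame; without it, no argument can close, because the counterexample above satisfies every hypothesis you are allowed to use.
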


Taking into account the isomorphism $\flat$ introduced above, we can give an alternative definition of the Reeb vector fields:
$$ R_t = \flat^{-1}(\tau)\ ,\quad R_s = \flat^{-1}(\eta)\,. $$

\begin{thm}[Darboux theorem for cocontact manifolds]\label{thm:Darboux-cocontact}
	Let $(M,\tau,\eta)$ be a cocontact manifold. Then, for every point $p\in M$, there exists a local chart $(U; t, q^i, p_i, s)$ around $p$ such that
	$$ \restr{\tau}{U} = \d t\ ,\quad \restr{\eta}{U} = \d s - p_i\d q^i\,. $$
	These coordinates are called \textbf{canonical} or \textbf{Darboux} coordinates. Moreover, in Darboux coordinates, the Reeb vector fields are
	$$ \restr{R_t}{U} = \parder{}{t}\ ,\quad \restr{R_s}{U} = \parder{}{s}\,. $$
\end{thm}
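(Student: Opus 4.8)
The plan is to exploit the two distributions noted after Definition~\ref{dfn:cocontact-manifold}: reduce to a single contact leaf of $\ker\tau$, apply the classical contact Darboux theorem there, and then propagate the resulting coordinates transversally using the flow of the time Reeb vector field $R_t$.

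First I would use that $\tau$ is closed: the Poincaré lemma gives a function $t$ on a neighbourhood of $p$ with $\tau = \d t$ and $t(p)=0$, and the level set $P_0 = \{t=0\}$ is then a $(2n+1)$-dimensional leaf of the integrable distribution $\ker\tau$. I claim $(P_0,\eta_0)$ with $\eta_0 = \eta|_{P_0}$ is a contact manifold. To see this, contract the volume form $\tau\wedge\eta\wedge(\d\eta)^{\wedge n}$ with $R_t$: using $i(R_t)\tau = 1$, $i(R_t)\eta = 0$ and $i(R_t)\d\eta = 0$ from Proposition~\ref{prop:Reeb-vector-fields}, this contraction equals $\eta\wedge(\d\eta)^{\wedge n}$, whose pullback to $P_0$ is $\eta_0\wedge(\d\eta_0)^{\wedge n}$, a volume form on $P_0$. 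The classical contact Darboux theorem then furnishes coordinates $(q^i,p_i,s)$ on $P_0$ near $p$ with $\eta_0 = \d s - p_i\,\d q^i$.

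Next I would transport these coordinates off the leaf via the flow $\phi_\lambda$ of $R_t$. Since $i(R_t)\tau = 1\neq 0$, the field $R_t$ is transverse to $P_0$, so by the flow-box theorem the map $(\lambda,y)\mapsto\phi_\lambda(y)$, $y\in P_0$, is a local diffeomorphism near $(0,p)$; moreover $R_t(t) = i(R_t)\tau = 1$ shows that the flow parameter $\lambda$ coincides with the Poincaré coordinate $t$. I define $q^i,p_i,s$ to be constant along the flow lines, i.e.\ pulled back from $P_0$, so that $R_t = \partial/\partial t$ in the resulting chart $(t,q^i,p_i,s)$. To identify $\eta$, write $\eta = a\,\d t + b_i\,\d q^i + c^i\,\d p_i + e\,\d s$. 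The key observation is that $\Lie_{R_t}\eta = i(R_t)\d\eta + \d\,i(R_t)\eta = 0$, and with $R_t = \partial/\partial t$ this forces every coefficient to be independent of $t$; the condition $i(R_t)\eta = 0$ gives $a = 0$; and pulling back to $P_0$ (where $t=0$) identifies $b_i = -p_i$, $c^i = 0$, $e = 1$ there, hence everywhere by $t$-independence. Thus $\eta = \d s - p_i\,\d q^i$ and $\tau = \d t$, as required.

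The expressions for the Reeb fields then follow by a direct check that $\partial/\partial t$ and $\partial/\partial s$ satisfy the defining equations of Proposition~\ref{prop:Reeb-vector-fields} (using $\d\eta = \d q^i\wedge\d p_i$), together with their uniqueness, which is guaranteed by the isomorphism $\flat$. I expect the main obstacle to be the transverse propagation step: one must be careful that the single Poincaré function $t$ really agrees with the flow parameter of $R_t$ and that the functions pulled back from $P_0$ genuinely complete to a coordinate system, which is exactly where the transversality $i(R_t)\tau\neq 0$ and the invariance $\Lie_{R_t}\eta = 0$ must be combined. By comparison, the contact Darboux input and the final coefficient bookkeeping are routine.
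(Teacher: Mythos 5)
Your proof is correct, and it follows the paper's overall strategy --- reduce to the classical contact Darboux theorem on the leaves of the integrable distribution $\ker\tau$ --- but it handles the transverse direction by a genuinely different, and more careful, mechanism. The paper takes foliation-adapted coordinates, applies the contact Darboux theorem leaf-by-leaf to obtain coordinates $(\tilde t,q^i,p_i,s)$ in which $\eta=\d s-p_i\,\d q^i$ is claimed as an identity of forms on $M$, and only afterwards normalizes $\tau=f\,\d\tilde t$ by redefining the time coordinate. That leaf-wise step silently requires a parametrized Darboux theorem (smooth dependence of the leaf coordinates on $\tilde t$) and, more importantly, does not by itself exclude a $\d\tilde t$-component of $\eta$, since Darboux on a leaf only controls the pullback of $\eta$ to that leaf. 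You instead fix $\tau=\d t$ first (Poincar\'e lemma), apply contact Darboux on the single leaf $\{t=0\}$, and propagate the coordinates with the flow of $R_t$; the identities $i(R_t)\eta=0$ and $\Lie_{R_t}\eta=i(R_t)\d\eta+\d\, i(R_t)\eta=0$ then kill the $\d t$-component and make all remaining coefficients $t$-independent, so the normal form on the leaf extends to an identity on a whole neighbourhood. This buys rigor exactly where the paper's sketch is thinnest, at the price of invoking the Reeb field and the flow-box theorem inside the proof --- which is legitimate, since Proposition~\ref{prop:Reeb-vector-fields} rests only on the isomorphism $\flat$ and is independent of the Darboux theorem. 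Your contraction argument $i(R_t)\bigl(\tau\wedge\eta\wedge(\d\eta)^{\wedge n}\bigr)=\eta\wedge(\d\eta)^{\wedge n}$, showing that each leaf is genuinely contact, also makes explicit a point the paper merely asserts.
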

\begin{proof}
	The idea of the proof is the following: in every cocontact manifold there is a contact foliation made of leaves of the distribution generated by $\ker\tau$ (which are $(2n+1)$-submanifolds) and, on each of them, the restriction of $\eta$ is a contact form. Now, we can take local coordinates adapted to the foliation and, on each leaf, use the Darboux theorem for contact manifolds. Hence, we have local coordinates $(\tilde t, q^i, p_i, s)$ such that $\eta = \d s - p_i\d q^i$. Finally, we can write $\tau$ as a combination of all these coordinates, $\tau = f(\tilde t, q^i, p_i, s)\d \tilde t$, where $f$ is a non-vanishing function, and then we can redefine the coordinate $t$.
\end{proof}

\begin{rmrk}\rm
    Consider a cocontact manifold $(M,\tau,\eta)$ such that $\tau = \d t$. Then, the two-form
    $$ \Omega = e^{-t}(\d\eta + \eta\wedge\tau) $$
    is a symplectic form on $M$. Moreover, the two-form $\Omega$ is exact, since
    $$ \Omega = \d(e^{-t}\eta)\,. $$
\end{rmrk}

\subsection{Cocontact manifolds as Jacobi manifolds}

It is known that symplectic, cosymplectic and contact manifolds are particular instances of a more general geometric structure: Jacobi manifolds. Then, it is reasonable to think that cocontact manifolds should also be Jacobi manifolds.

\subsubsection*{Jacobi manifolds}

\begin{dfn}
    A \textbf{Jacobi manifold} is a triple $(M,\Lambda,E)$, where $\Lambda$ is a bivector field on $M$, i.e. a skew-symmetric 2-contravariant tensor field, and $E$ is a vector field on $M$, such that
    $$ [\Lambda,\Lambda] = 2E\wedge\Lambda\ ,\qquad \Lie_E\Lambda = [E,\Lambda] = 0\,, $$
    where $[\cdot,\cdot]$ denotes the Schouten--Nijenhuis bracket {\rm\cite{Nij1955, Sch1953}}.
\end{dfn}

Given a Jacobi manifold $(M,\Lambda,E)$, we can define a bilinear map on the space of smooth functions on $M$, called the \textbf{Jacobi bracket}, given by
$$ \{f,g\} = \Lambda(\d f,\d g) + fE(g) - gE(f)\,. $$
The Jacobi bracket is bilinear, skew-symmetric, satisfies the Jacobi identity and fulfills the weak Leibniz rule
\begin{equation}\label{eq:weak-Leibniz-rule}
    \supp (\{f,g\}) \subseteq \supp (f) \cap \supp (g)\,.
\end{equation}
Hence, the set of smooth maps $\Cinfty(M)$ equipped with the Jacobi bracket $\{\cdot,\cdot\}$ is a local Lie algebra in the sense of Kirillov.

The converse is also true: given a local Lie algebra structure on $\Cinfty(M)$, one can define a Jacobi structure $(\Lambda,E)$ on $M$ such that its Jacobi bracket coincides with the bracket of the local Lie algebra bracket \cite{Kir1976,Lic1978}.

\begin{dfn}
    Given a Jacobi manifold $(M,\Lambda,E)$, we can define the morphism of vector bundles
    $$ \hat\Lambda\colon\T^\ast M\to\T M\,,$$
    given by $\hat\Lambda(\alpha) = \Lambda(\alpha,\cdot)$. This morphism of vector bundles induces a morphism of $\Cinfty(M)$-modules $\hat\Lambda\colon\Omega^1(M)\to\X(M)$.
\end{dfn}

Given a smooth function $f\in\Cinfty(M)$, we define the \textbf{Hamiltonian vector field} associated with $f$ as
$$ X_f = \hat\Lambda(\d f) + fE\,. $$
The \textbf{characteristic distribution} $\C$ of $(M,\Lambda,E)$ is spanned by these vector fields $X_f$. It is integrable, and it can be defined in terms of the bivector $\Lambda$ and the vector field $E$ as
$$ \C = \hat\Lambda(\cT M) + \langle E \rangle\,. $$

In general, the morphism $\hat\Lambda$ is not an isomorphism, as can be seen in the following example.

\begin{exmpl}[Contact manifolds]\rm
    Consider a contact manifold $(M,\eta)$. We have the $\Cinfty(M)$-module isomorphism $\flat\colon\X(M)\to\Omega^1(M)$ given by
    $$ \flat(X) = i(X)\d\eta + (i(X)\eta)\eta\,, $$
    and we denote its inverse by $\sharp = \flat^{-1}$. The Reeb vector field is $R = \sharp\eta$.
    
    Then we can define a Jacobi structure $(\Lambda,E)$ on $M$ given by
    $$ \Lambda(\alpha,\beta) = -\d\eta(\sharp\alpha,\sharp\beta)\ ,\quad E = -R\,. $$
    In this case, the morphism $\hat\Lambda$ is given by
    $$ \hat\Lambda(\alpha) = \sharp(\alpha) - \alpha(R)R\,, $$
    and it is clear that $\hat\Lambda$ is not an isomorphism, since $\ker\hat\Lambda = \langle\eta\rangle$ and $\Ima\hat\Lambda = \ker\eta$.
\end{exmpl}

\begin{exmpl}[Poisson manifolds]\rm
    A \textbf{Poisson manifold} is a smooth manifold $M$ such that $\Cinfty(M)$ has a Lie bracket satisfying the Leibniz rule
    $$ \{fg,h\} = f\{g,h\} + \{f,h\}g\,. $$
    It can be seen that this implies the weak Leibniz rule \eqref{eq:weak-Leibniz-rule}, thus giving a local Lie algebra structure to the set of smooth functions on $M$. One can prove that the Jacobi bracket of a Jacobi manifold $(M,\Lambda,E)$ is Poisson if and only if $E=0$. Taking this into account, a Poisson manifold is a Jacobi manifold with $E=0$. We can redefine the notion of Poisson manifold as a couple $(M,\Lambda)$, where $\Lambda$ is a bivector such that $[\Lambda,\Lambda] = 0$.

    We know that both symplectic and cosymplectic manifolds are Poisson. Hence, they are also Jacobi manifolds with
    $$ \Lambda(\alpha,\beta) = \omega(\sharp\alpha,\sharp\beta)\ ,\quad E = 0\,, $$
    where $\omega$ is the 2-form of the symplectic or cosymplectic manifold and $\sharp$ is the inverse of the flat morphism $\flat$ of the symplectic or cosymplectic manifold.
\end{exmpl}

\subsubsection*{Jacobi structure of cocontact manifolds}

Consider now a cocontact manifold $(M,\tau,\eta)$. In Darboux coordinates, the flat isomorphism reads
\begin{align*}
    \flat\left(\parder{}{t}\right) &= \d t = \tau\,,\\
    \flat\left(\parder{}{q^i}\right) &= \d p_i - p_i\d s + p_i^2\d q^i\,,\\
    \flat\left(\parder{}{p_i}\right) &= -\d q^i\,,\\
    \flat\left(\parder{}{s}\right) &= \d s - p_i\d q^i = \eta\,.
\end{align*}
Its inverse is the isomorphism $\sharp\colon\Omega^1(M)\to\X(M)$, which is given by
\begin{align*}
    \sharp\d t &= \parder{}{t}\,,\\
    \sharp\d q^i &= -\parder{}{p_i}\,,\\
    \sharp\d p_i &= p_i\parder{}{s} + \parder{}{q^i}\,,\\
    \sharp\d s &= \parder{}{s} - p_i\parder{}{p_i}\,.
\end{align*}

We can define a 2-contravariant skew-symmetric tensor $\Lambda$:
$$ \Lambda(\alpha,\beta) = -\d\eta(\sharp\alpha,\sharp\beta)\,, $$
which in Darboux coordinates reads
$$ \Lambda = \parder{}{q^i}\wedge\parder{}{p_i} - p_i\parder{}{p_i}\wedge\parder{}{s}\,. $$

\begin{prop}
    Let $(M,\tau,\eta)$ be a cocontact manifold. Then, $(\Lambda,E)$, where $E = -R_s$, is a Jacobi structure on $M$.
\end{prop}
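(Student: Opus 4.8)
The plan is to verify the two defining conditions of a Jacobi structure, namely $[\Lambda,\Lambda] = 2E\wedge\Lambda$ and $[E,\Lambda]=\Lie_E\Lambda = 0$, by a direct computation in Darboux coordinates. Both sides of each identity are tensor fields (a trivector and a bivector, respectively) and the conditions are coordinate-independent statements, so it suffices to establish them in a neighbourhood of an arbitrary point $p\in M$. Theorem \ref{thm:Darboux-cocontact} furnishes such a chart $(U;t,q^i,p_i,s)$, and in it we already have the explicit expressions computed above,
$$ \Lambda = \parder{}{q^i}\wedge\parder{}{p_i} - p_i\parder{}{p_i}\wedge\parder{}{s}\,,\qquad E = -R_s = -\parder{}{s}\,. $$

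First I would dispose of the easy condition $\Lie_E\Lambda = 0$. The coefficients of $\Lambda$ in these coordinates are either constant (equal to $1$) or depend only on $p_i$ (namely $-p_i$); none of them depends on $s$, and $E$ is, up to sign, the coordinate field $\partial/\partial s$. Since the Lie derivative of a tensor along a coordinate vector field acts by differentiating its coefficients (the coordinate fields $\partial/\partial q^i,\partial/\partial p_i,\partial/\partial s$ all commute with $\partial/\partial s$), we get $\Lie_{\partial/\partial s}\Lambda = 0$, hence $[E,\Lambda]=0$.

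Next I would compute $[\Lambda,\Lambda]$ from the coordinate formula for the Schouten--Nijenhuis bracket of a bivector,
$$ \tfrac12[\Lambda,\Lambda]^{abc} = \Lambda^{am}\partial_m\Lambda^{bc} + \Lambda^{bm}\partial_m\Lambda^{ca} + \Lambda^{cm}\partial_m\Lambda^{ab}\,. $$
The only non-constant components of $\Lambda$ are $\Lambda^{p_i s}=-p_i$, so the only non-vanishing partial derivatives are $\partial_{p_j}\Lambda^{p_i s} = -\delta_{ij}$; for a term to survive it must contract such a derivative against a component $\Lambda^{q^i p_i}=1$. A short index count shows that the three indices $a,b,c$ must then all belong to the set $\{q^i,p_i,s\}$ for a single $i$, so the sole surviving components are of the type $[\Lambda,\Lambda]^{q^i p_i s}$ (no sum), and one finds these equal $-2$. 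On the other side, $2E\wedge\Lambda = -2\,(\partial/\partial s)\wedge\Lambda$; the summand $-p_i\,(\partial/\partial p_i)\wedge(\partial/\partial s)$ is killed by the wedge with $\partial/\partial s$, leaving $2E\wedge\Lambda = -2\,(\partial/\partial q^i)\wedge(\partial/\partial p_i)\wedge(\partial/\partial s)$, whose components coincide with those of $[\Lambda,\Lambda]$. This establishes the first identity and completes the proof.

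The computation is entirely routine; the only point needing care is the sign bookkeeping in the Schouten--Nijenhuis bracket and confirming that no component involving $\partial/\partial t$ can appear — which is automatic, since $\Lambda$ neither contains $\partial/\partial t$ nor has coefficients depending on $t$, so $t$ plays no role. A coordinate-free alternative would be to introduce the Jacobi bracket $\{f,g\}=\Lambda(\d f,\d g)+fE(g)-gE(f)$ and check the Jacobi identity directly (the weak Leibniz rule being automatic for this $\Lambda$ and $E$), invoking the equivalence between local Lie algebra structures on $\Cinfty(M)$ and Jacobi structures recalled above; but the Darboux computation is the most economical.
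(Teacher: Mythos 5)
Your proof is correct and follows exactly the route the paper indicates: the paper's entire justification is the remark that ``the previous proposition can be proved using Darboux coordinates,'' and you carry out precisely that verification, checking $\Lie_E\Lambda = 0$ and $[\Lambda,\Lambda] = 2E\wedge\Lambda$ from the coordinate expressions $\Lambda = \parder{}{q^i}\wedge\parder{}{p_i} - p_i\parder{}{p_i}\wedge\parder{}{s}$ and $E=-\parder{}{s}$. The only difference is that you supply the details (correctly, including the sign bookkeeping and the observation that both tensorial identities are local, so Darboux charts suffice) that the paper omits.
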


The previous proposition can be proved using Darboux coordinates. Using Darboux coordinates, the Jacobi bracket reads
$$ \{f,g\} = \parder{f}{q^i}\parder{g}{p_i} - \parder{g}{q^i}\parder{f}{p_i} - p_i\left( \parder{f}{p_i}\parder{g}{s} - \parder{g}{p_i}\parder{f}{s} \right) - f\parder{g}{s} + g\parder{f}{s}\,. $$
In particular, one has
$$ \{q^i,q^j\} = \{p_i, p_j\} = 0\,,\qquad \{q^i, p_j\} = \delta_j^i\,,\qquad \{q^i, s\} = -q^i\,,\qquad \{p_i,s\} = -2p_i\,. $$

\subsubsection*{Legendrian submanifolds}

In the previous section we have seen that every cocontact manifold $(M,\tau,\eta)$ is a Jacobi manifold taking $\Lambda(\alpha,\beta) = -\d\eta(\sharp\alpha,\sharp\beta)$ and $E = -R_s$. We will study a special kind of submanifolds associated to this structure.

In first place, we find a new expression for the morphism $\hat\Lambda$ which is easier to manipulate. We have
$$ \alpha = \flat\sharp\alpha = i_{\sharp\alpha}\d\eta + (i_{\sharp\alpha}\eta)\eta + (i_{\sharp\alpha}\tau)\tau\,, $$
and contracting with the Reeb vector fields $R_s,R_t$, we get
$$ \alpha(R_t) = i_{\sharp\alpha}\tau\,,\qquad \alpha(R_s) = i_{\sharp\alpha}\eta\,. $$
Now,
\begin{align*}
    \beta(\hat\Lambda(\alpha)) &= -i_{\sharp\beta}i_{\sharp\alpha}\d\eta\\
    &= i_{\sharp\alpha}i_{\sharp\beta}\d\eta\\
    &= i_{\sharp\alpha}\left( \beta - (i_{\sharp\beta}\eta)\eta - (i_{\sharp\beta}\tau)\tau \right)\\
    &= \beta(\sharp\alpha) - \beta(R_s)\alpha(R_s) - \beta(R_t)\alpha(R_t)\\
    &= \beta\left( \sharp\alpha - \alpha(R_s)R_s - \alpha(R_t)R_t \right)\,,
\end{align*}
and thus
$$ \hat\Lambda(\alpha) = \sharp\alpha - \alpha(R_s)R_s - \alpha(R_t)R_t\,. $$

It is clear that $\tau,\eta\in\ker\hat\Lambda$. Consider $\beta\in\ker\hat\Lambda$. Then, $\sharp\beta = \beta(R_s)R_s + \beta(R_t)R_t$. Using the morphism $\flat$, we have that
$$ \beta = \beta(R_s)\eta + \beta(R_t)\tau\,. $$
Hence, we have that $\ker\hat\Lambda = \langle\tau,\eta\rangle$.

\begin{dfn}
    Let $(M,\tau,\eta)$ be a cocontact manifold and consider a submanifold $N\subset M$. We define the \textbf{cocontact orthogonal complement} of $N$ as
    $$ \T_pN^\perp = \hat\Lambda\left((T_pN)^\circ\right)\,. $$
\end{dfn}

Notice that $v\in\T_pN^\perp$ if, and only if, there exists $\alpha\in\cT_pN$ such that $\hat\Lambda(\alpha) = v$ and $\alpha(u) = 0$ for every $u\in\T_pN$.

\begin{dfn}
    Let $(M,\tau,\eta)$ be a cocontact manifold and consider a submanifold $N\hookrightarrow M$.
    \begin{enumerate}[{\rm (i)}]
        \item $N$ is said to be \textbf{coisotropic} if $\T N^\perp\subset\T N$.
        \item $N$ is said to be \textbf{isotropic} if $\T N \subset\T N^\perp$.
        \item $N$ is said to be \textbf{Legendrian} if $\T N = \T N^\perp$.
    \end{enumerate}
\end{dfn}

The following theorem characterizes isotropic and Legendrian submanifolds.

\begin{thm}
    Consider a cocontact manifold $(M,\tau,\eta)$ with $\dim M = 2n+2$ and a submanifold $j\colon N\hookrightarrow M$. Then,
    \begin{enumerate}[{\rm (i)}]
        \item $N$ is isotropic if and only if $j^\ast\eta = 0$ and $j^\ast\tau = 0$.
        \item $N$ is Legendrian if and only if $j^\ast\eta = 0$, $j^\ast\tau = 0$ and $\dim N = n$.
    \end{enumerate}
\end{thm}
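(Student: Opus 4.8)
The plan is to exploit two facts already established above: that $\ker\hat\Lambda = \langle\tau,\eta\rangle$, and the explicit formula $\hat\Lambda(\alpha) = \sharp\alpha - \alpha(R_s)R_s - \alpha(R_t)R_t$. First I would pin down the image of $\hat\Lambda$. Contracting this formula with $\tau$ and with $\eta$, and using $i(\sharp\alpha)\tau = \alpha(R_t)$, $i(\sharp\alpha)\eta = \alpha(R_s)$ together with the defining relations of the Reeb fields, one finds $\tau(\hat\Lambda(\alpha)) = 0$ and $\eta(\hat\Lambda(\alpha)) = 0$ for every $\alpha$. Hence $\Ima\hat\Lambda\subseteq\ker\tau\cap\ker\eta$, and since $\dim\Ima\hat\Lambda = (2n+2) - \dim\ker\hat\Lambda = 2n = \dim(\ker\tau\cap\ker\eta)$, in fact $\Ima\hat\Lambda = \ker\tau\cap\ker\eta$. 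In particular $\T_pN^\perp\subseteq\ker\tau_p\cap\ker\eta_p$ for any submanifold, which will drive the easy inclusions.

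For item (i), the direction ``isotropic $\Rightarrow$ $j^\ast\tau = j^\ast\eta = 0$'' is then immediate: $\T_pN\subseteq\T_pN^\perp\subseteq\ker\tau_p\cap\ker\eta_p$ says exactly that $\tau$ and $\eta$ annihilate every vector tangent to $N$. For the converse I would assume $j^\ast\tau = 0$ and $j^\ast\eta = 0$ and take $u\in\T_pN$, setting $\alpha = i(u)\d\eta$. Since $\tau(u) = \eta(u) = 0$, the flat isomorphism collapses to $\flat(u) = i(u)\d\eta = \alpha$, so $\sharp\alpha = u$ and $\alpha(R_t) = \tau(u) = 0$, $\alpha(R_s) = \eta(u) = 0$, whence $\hat\Lambda(\alpha) = u$. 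It then remains to check $\alpha\in(\T_pN)^\circ$: for $w\in\T_pN$ one has $\alpha(w) = \d\eta(u,w) = (j^\ast\d\eta)(u,w)$, which vanishes because $j^\ast\d\eta = \d(j^\ast\eta) = 0$. Thus $u = \hat\Lambda(\alpha)\in\T_pN^\perp$, giving $\T_pN\subseteq\T_pN^\perp$.

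For item (ii) the core is a dimension count for $\T_pN^\perp = \hat\Lambda((\T_pN)^\circ)$. By rank--nullity, $\dim\T_pN^\perp = \dim(\T_pN)^\circ - \dim\!\big(\ker\hat\Lambda\cap(\T_pN)^\circ\big)$. Since $\tau_p,\eta_p\in(\T_pN)^\circ$ is equivalent to $j^\ast\tau = 0$ and $j^\ast\eta = 0$, whenever these hold the whole kernel $\langle\tau,\eta\rangle$ sits inside $(\T_pN)^\circ$ and contributes dimension $2$, yielding $\dim\T_pN^\perp = (2n+2-\dim N) - 2 = 2n - \dim N$. If $N$ is Legendrian it is in particular isotropic, so by (i) $j^\ast\tau = j^\ast\eta = 0$ and the formula applies; combined with $\dim\T_pN^\perp = \dim\T_pN = \dim N$ this forces $\dim N = 2n - \dim N$, i.e. $\dim N = n$. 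Conversely, if $j^\ast\tau = j^\ast\eta = 0$ and $\dim N = n$, then (i) gives $\T_pN\subseteq\T_pN^\perp$ while the count gives $\dim\T_pN^\perp = n = \dim\T_pN$, so the inclusion is an equality and $N$ is Legendrian.

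I expect the only delicate point to be the converse of (i): one must exhibit, for each tangent vector $u$, an annihilating covector mapped to $u$ by $\hat\Lambda$. The choice $\alpha = i(u)\d\eta$ works precisely because $j^\ast\eta = 0$ upgrades to $j^\ast\d\eta = 0$, which is exactly what places $\alpha$ in the annihilator; everything else is linear algebra driven by $\ker\hat\Lambda = \langle\tau,\eta\rangle$ and $\Ima\hat\Lambda = \ker\tau\cap\ker\eta$.
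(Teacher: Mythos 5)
Your proposal is correct and follows essentially the same route as the paper: the converse of (i) uses the identical choice $\alpha = i(v)\d\eta$ together with $j^\ast\d\eta = 0$, and (ii) is the same rank--nullity count exploiting $\ker\hat\Lambda = \langle\tau,\eta\rangle\subseteq(\T_pN)^\circ$. Your preliminary identification $\Ima\hat\Lambda = \ker\tau\cap\ker\eta$ is just a tidier packaging of the contraction computation the paper performs pointwise in the forward direction of (i).
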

\begin{proof}
    \begin{enumerate}[(i)]
        \item Suppose $\T N \subset \T N^\perp$. Let $v\in\T_p N$, then $v\in\T_pN^\perp$ and hence there exists some $\alpha\in\cT_pN$ such that $\hat\Lambda(\alpha) = v$ and $\alpha(u) = 0$ for every $u\in\T_pN$. Then,
        $$ \eta(v) = \eta(\hat\Lambda(\alpha)) = \eta(\sharp\alpha) - \alpha(R_s)R_s - 0 = 0\,. $$
        Since $\eta_p(v) = 0$ for every $p\in N$ and every $v\in\T_pN$, we have $j^\ast\eta = 0$. Analogously, $j^\ast\tau = 0$.
        
        Conversely, suppose $j^\ast\eta = 0$ and $j^\ast\tau = 0$. Let $v\in\T_pN$. We want to check that $v\in\T_pN^\perp$. Consider
        $$ \alpha = i_v\d\eta\,. $$
        Since $j^\ast\eta = 0$, we have $j^\ast\d\eta = \d j^\ast\eta = 0$ and thus $i_ui_v\d\eta = \alpha(u) = 0$ for every $u\in\T_pN$. Now, we want to check that
        $$ v = \hat\Lambda(\alpha) = \sharp\alpha - \alpha(R_s)R_s - \alpha(R_t)R_t = \sharp\alpha\,. $$
        Since $\flat$ is an isomorphism, this is the same as checking that $\flat(v) = \alpha$, which is obvious taking into account that $j^\ast\eta = 0$ and $j^\ast\tau = 0$. Finally, we have already proved that $\alpha(\T_pN) = 0$.
        
        \item Suppose that $\T N = \T N^\perp$ and let $k = \dim N$. Thus, at any point $p\subset N$,
        $$ \dim\T_pN^\circ = 2n+2 - \dim\T_pN = 2n+2-k\,. $$
        Since $\T N\subset\T N^\perp$, we have that $\tau_p,\eta_p\in\T_pN^\circ$ and then $\ker\hat\Lambda_p\subset\T_pN^\circ$. Now, since $T_pN = T_pN^\perp = \hat\Lambda_p(\T_pN^\circ)$,
        $$ \underbrace{\dim\T_pN^\circ}_{2n+2-k} = \underbrace{\dim\hat\Lambda_p(\T_pN^\circ)}_{k} + \underbrace{\dim\ker\hat\Lambda_p}_{2}\,, $$
        which implies that $k = n$.
        
        Conversely, we have that $2n+2-n = \dim\T_pN^\perp + 2$ and thus $\dim\T_pN^\perp = n = \dim\T_pN$. Since $\T_pN\subset\T_pN^\perp$, it is clear that $\T_pN = \T_pN^\perp$.
    \end{enumerate}
\end{proof}

\section{Cocontact Hamiltonian systems}

We can use the cocontact geometric framework developed in the previous section to deal with Hamiltonian systems with dissipation and with explicit time dependence.

\begin{dfn}
	A \textbf{cocontact Hamiltonian system} is a tuple $(M,\tau,\eta,H)$ where $(M,\tau,\eta)$ is a cocontact manifold and $H\in \Cinfty (M)$ is a Hamiltonian function.
	
	The \textbf{cocontact Hamilton equations} for a curve $\psi\colon I\subset \R\to M$ are
	\begin{equation}\label{eq:Ham-eq-cocontact-sections}
		\begin{dcases}
			i(\psi')\d \eta = \d H-(\Lie_{R_s}H)\eta-(\Lie_{R_t}H)\tau\,,
			\\
			i(\psi')\eta = -H\,,
			\\
			i(\psi')\tau = 1\,,	 
		\end{dcases}
	\end{equation}
	where $\psi'\colon I\subset\R\to\T M$ is the canonical lift of the curve $\psi$ to the tangent bundle $\T M$.
\end{dfn}

Let us express these equations using Darboux coordinates (see Theorem \ref{thm:Darboux-cocontact}). Consider the curve $\psi(r)=(f(r),q^i(r),p_i(r),s(r))$. The third equation in \eqref{eq:Ham-eq-cocontact-sections} imposes that $f(r)=r$, thus we will denote $r\equiv t$. The other equations become:

\begin{equation*}
   \begin{dcases}
		\dot q^i =\frac{\partial H}{\partial p_i}\,,
		\\
	   \dot p_i =-\left(\frac{\partial H}{\partial q^i}+p_i\frac{\partial H}{\partial s}\right)\,,
		\\
	   \dot s =p_ i\frac{\partial H}{\partial p_i}-H\,.		
	\end{dcases}	  
\end{equation*}

We can give another interpretation of these equations using vector fields.

\begin{dfn}
	Let $(M,\tau,\eta,H)$ be a cocontact Hamiltonian system. The \textbf{cocontact Hamiltonian equations} for a vector field $X\in\X(M)$ are:
	\begin{equation}\label{eq:Ham-eq-cocontact-vectorfields}
		\begin{dcases}
			   i(X)\d \eta = \d H-(\Lie_{R_s}H)\eta-(\Lie_{R_t}H)\tau\,,
			\\
			i(X)\eta = -H\,,
			\\
			i(X)\tau = 1\,.  
		\end{dcases}	   
	\end{equation}
\end{dfn}
Equations \eqref{eq:Ham-eq-cocontact-vectorfields} can be rewritten using the isomorphism $\flat$ as
$$
    \flat(X)=\d H-\left(\Lie_{R_s}H+H\right)\eta+\left(1-\Lie_{R_t}H\right)\tau\,.
$$
Therefore, it is clear that the cocontact Hamilton equations
\eqref{eq:Ham-eq-cocontact-vectorfields} have a unique solution. The solution to these equations is called the \textbf{cocontact Hamiltonian vector field}, and will be denoted by $X_H$. Its local expression is
$$ X_H = \parder{}{t} + \parder{H}{p_i}\parder{}{q^i} - \left(\parder{H}{q^i} + p_i\parder{H}{s}\right)\parder{}{p_i} + \left(p_i\parder{H}{p_i} - H\right)\parder{}{s}\,. $$

\begin{rmrk}\rm
    It can be seen that the vector field $X_H - R_t$ coincides with the one for the associated Jacobi manifold given in the preceding section.
\end{rmrk}

\begin{prop}
	Let $X$ be a vector field in $M$. Then every integral curve $\psi:U\subset \mathbb{R}\rightarrow M$ of $X$ satisfies the cocontact equations \eqref{eq:Ham-eq-cocontact-sections} if, and only if, $X$ is a solution to \eqref{eq:Ham-eq-cocontact-vectorfields}.
\end{prop}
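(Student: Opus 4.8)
The plan is to prove the equivalence directly by relating the curve equations \eqref{eq:Ham-eq-cocontact-sections} to the vector field equations \eqref{eq:Ham-eq-cocontact-vectorfields} through the defining relation between a vector field and its integral curves. The crucial observation is that if $\psi$ is an integral curve of $X$, then by definition $\psi'(r) = X(\psi(r))$ for all $r$ in the domain, where $\psi'$ is the canonical lift of $\psi$ to $\T M$. This means that $\psi'$ and $X$ agree along the image of $\psi$, so any contraction identity involving $\psi'$ at points of the curve is exactly the corresponding identity for $X$ evaluated along $\psi$.

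First I would establish the forward direction. Assume $X$ solves \eqref{eq:Ham-eq-cocontact-vectorfields}, and let $\psi$ be any integral curve of $X$. Then for each $r$ we have $\psi'(r) = X_{\psi(r)}$, and substituting this into the three contractions $i(\psi')\d\eta$, $i(\psi')\eta$, $i(\psi')\tau$ reproduces precisely the three equations of \eqref{eq:Ham-eq-cocontact-vectorfields} evaluated at $\psi(r)$. Hence $\psi$ satisfies \eqref{eq:Ham-eq-cocontact-sections}. This direction is essentially a substitution and requires no further work beyond noting that the identities in \eqref{eq:Ham-eq-cocontact-vectorfields}, being equalities of $1$-forms on $M$, restrict to equalities along the curve.

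For the converse, suppose every integral curve $\psi$ of $X$ satisfies \eqref{eq:Ham-eq-cocontact-sections}. I would use the fact that through every point $p\in M$ there passes an integral curve $\psi$ of $X$ with $\psi(r_0)=p$ and $\psi'(r_0)=X_p$. Evaluating the three equations of \eqref{eq:Ham-eq-cocontact-sections} at $r_0$ and using $\psi'(r_0)=X_p$ gives the three equations of \eqref{eq:Ham-eq-cocontact-vectorfields} at the point $p$. Since $p$ is arbitrary and such a curve exists through every point, the equations hold pointwise everywhere on $M$, so $X$ solves \eqref{eq:Ham-eq-cocontact-vectorfields}. The existence of an integral curve through each point with the prescribed initial velocity is the standard existence theorem for ODEs applied to the smooth vector field $X$.

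I do not expect any genuine obstacle here: the statement is the standard bridge between the ``sections'' formulation and the ``vector fields'' formulation of a dynamical equation, and both directions reduce to the tautology $\psi' = X\circ\psi$ for integral curves. The only point deserving mild care is to make sure, in the converse, that one can realize an arbitrary point $p$ together with the velocity $X_p$ as $(\psi(r_0),\psi'(r_0))$ for some integral curve; this is immediate from the existence of integral curves of a smooth vector field. Once that is observed, the proof is complete.
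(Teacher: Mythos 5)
Your proof is correct and follows essentially the same route as the paper's: both directions reduce to the identity $\psi' = X\circ\psi$ for integral curves, together with the fact that every point of $M$ lies on an integral curve of $X$ (which the paper also invokes, though it compresses the whole argument into one sentence). Your write-up merely makes explicit the substitution and the pointwise evaluation that the paper leaves implicit.
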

\begin{proof}
	This is a direct consequence of equations \eqref{eq:Ham-eq-cocontact-sections} and \eqref{eq:Ham-eq-cocontact-vectorfields}, and the fact that any point of $M$ is in the image of an integral curve of $X$.
\end{proof}

\section{Cocontact Lagrangian systems}\label{sec:Lagrangian-formalism}
\subsection{Lagrangian phase space and geometric structures}\label{subsub:Lagrangian-geometric-structures}

Let $Q$ be a smooth $n$-dimensional manifold. Consider the product manifold $\R\times\T Q\times \R$ endowed with natural coordinates $(t, q^i, v^i, s)$. We have the canonical projections
\begin{align*}
	\tau_1&\colon \R\times\T Q\times\R\to\R\ ,&& \tau_1(t, v_q, s) = t\,,\\
	\tau_2&\colon\R\times\T Q\times\R\to\T Q\ ,&& \tau_2(t, v_q, s) = v_q\,,\\
	\tau_3&\colon\R\times\T Q\times\R\to\R\ ,&& \tau_3(t, v_q, s) = s\,,\\
	\tau_0&\colon\R\times\T Q\times\R\to \R\times Q\times\R\ ,&& \tau_0(t, v_q, s) = (t, q, s)\,. 
\end{align*}

Notice that $\tau_2$ and $\tau_0$ are the projection maps of two vector bundle structures. Usually, we will have in mind the projection $\tau_0$. In fact, the vector bundle $\R\times\T Q\times\R\overset{\tau_0}{\longrightarrow}\R\times Q\times\R$ is the pull-back of the tangent bundle with respect to the map $\R\times Q\times\R\to Q$. 

Our goal is to develop a time-dependent contact Lagrangian formalism on the manifold $\R\times\T Q\times\R$. The usual geometric structures of the tangent bundle can be naturally extended to the cocontact Lagrangian phase space $\R\times\T Q\times\R$. In particular, the vertical endomorphism of $\T(\T Q)$ yields a \textbf{$\tau_0$-vertical endomorphism} $\mathcal{J}\colon\T(\R\times\T Q\times\R)\to\T(\R\times\T Q\times\R)$. In the same way, the Liouville vector field on the fibre bundle $\T Q$ gives a \textbf{Liouville vector field} $\Delta\in\X(\R\times\T Q\times\R)$. Actually, this Liouville vector field coincides with the Liouville vector field of the vector bundle $\tau_0$. The local expressions of these objects in Darboux coordinates are
$$ {\cal J} = \frac{\partial}{\partial v^i} \otimes \d q^i \,,\quad \Delta =  v^i\, \frac{\partial}{\partial v^i} \,. $$

\begin{dfn}\label{dfn:holonomic-path}
    Let ${\bf c} \colon\R \rightarrow \R\times Q\times\R$ be a path, with ${\bf c} = (\mathbf{c}_1,\mathbf{c}_2,\mathbf{c}_3)$. The \textbf{prolongation} of ${\bf c}$ to $\R\times\T Q\times\R$ is the path
    $$ {\bf \tilde c} =  (\mathbf{c}_1,\mathbf{c}_2',\mathbf{c}_3) \colon \R \longrightarrow \R\times\T Q \times \R  \,, $$
    where $\mathbf{c}_2'$ is the velocity curve of~$\mathbf{c}_2$. Every path ${\bf \tilde c}$ which is the prolongation of a path ${\bf c} \colon\R \rightarrow \R\times Q\times\R$ is said to be \textbf{holonomic}. A vector field 
    $\Gamma \in \X(\R\times\T Q \times \R)$ is said to satisfy the \textbf{second-order condition} (for short: it is a {\sc sode}) when all of its integral curves are holonomic.
\end{dfn}

This definition can be equivalently expressed in terms of the canonical structures defined above:

\begin{prop}
    A vector field $\Gamma  \in \X(\R\times\T Q\times\R)$ is a {\sc sode} if, and only if, ${\cal J} \circ \Gamma = \Delta$.
\end{prop}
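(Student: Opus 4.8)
The plan is to work in the natural coordinates $(t,q^i,v^i,s)$ of $\R\times\T Q\times\R$, since both canonical objects $\mathcal{J}$ and $\Delta$ have the simple local expressions recalled above, and since the holonomy of a curve is most transparent in these coordinates. First I would write an arbitrary vector field as
$$ \Gamma = f\,\parder{}{t} + F^i\,\parder{}{q^i} + G^i\,\parder{}{v^i} + g\,\parder{}{s}\,, $$
with $f,F^i,G^i,g\in\Cinfty(\R\times\T Q\times\R)$, and then apply $\mathcal{J} = \parder{}{v^i}\otimes\d q^i$. Since $\mathcal{J}$ sends $\parder{}{q^i}$ to $\parder{}{v^i}$ and annihilates $\parder{}{t}$, $\parder{}{v^i}$ and $\parder{}{s}$, one obtains $\mathcal{J}\circ\Gamma = F^i\,\parder{}{v^i}$. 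Comparing with $\Delta = v^i\,\parder{}{v^i}$ shows that the algebraic condition $\mathcal{J}\circ\Gamma = \Delta$ is equivalent to $F^i = v^i$ for every $i$, that is, to $\d q^i(\Gamma) = v^i$.

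Next I would bring in the integral curves. Writing an integral curve of $\Gamma$ as $\psi(r) = (t(r),q^i(r),v^i(r),s(r))$, the defining system reads $\dot t = f$, $\dot q^i = F^i$, $\dot v^i = G^i$, $\dot s = g$, all evaluated along $\psi$. By Definition~\ref{dfn:holonomic-path}, the curve $\psi$ is holonomic precisely when it is the prolongation of its projection $\tau_0\circ\psi = (t(r),q^i(r),s(r))$; by the definition of the velocity curve this means exactly $v^i(r) = \dot q^i(r)$ for all $i$. Combining this with $\dot q^i = F^i$, holonomy of $\psi$ amounts to the relation $F^i = v^i$ holding along $\psi$.

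Finally I would assemble the equivalence. If $\mathcal{J}\circ\Gamma = \Delta$, then $F^i = v^i$ identically, so every integral curve satisfies $v^i = \dot q^i$ and is holonomic, whence $\Gamma$ is a {\sc sode}. Conversely, if $\Gamma$ is a {\sc sode}, then $F^i = v^i$ holds along every integral curve; since through every point of $\R\times\T Q\times\R$ there passes an integral curve of $\Gamma$, the equality $F^i = v^i$ upgrades to an identity of functions, giving $\mathcal{J}\circ\Gamma = \Delta$. The one point deserving care---the main, if modest, obstacle---is precisely this passage from a relation holding along curves to a pointwise identity of the coefficient functions; it rests on the existence of an integral curve through each point, together with the observation that $f$, $g$ and $G^i$ remain entirely free, so that the sole constraint imposed by holonomy is $F^i=v^i$ and no spurious relation among the other components is forced.
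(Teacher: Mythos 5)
Your proof is correct: the coordinate computation showing $\mathcal{J}\circ\Gamma = F^i\,\partial/\partial v^i$, the identification of holonomy with $v^i = \dot q^i$ along curves, and the passage from the relation along integral curves to the pointwise identity $F^i = v^i$ (via existence of an integral curve through every point) are exactly what is needed. The paper states this proposition without proof, treating it as an immediate consequence of the local expressions of $\mathcal{J}$, $\Delta$ and of Definition~\ref{dfn:holonomic-path}; your argument is precisely that implicit verification, including the one genuinely non-trivial point (curves-to-functions) made explicit.
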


In natural coordinates, if ${\bf c}(r)=(t(r), c^i(r), s(r))$, then
$$ {\bf \tilde c}(r) = \left(t(r), c^i(r),\frac{\d c^i}{\d r}(r), s(r) \right) \,. $$
The local expression of a {\sc sode} is
\begin{equation}\label{eq:sode-local-expression}
    \Gamma = f\parder{}{t} + v^i \parder{}{q^i} + G^i \parder{}{v^i} + g\,\frac{\partial}{\partial s}\,.
\end{equation}
So, in coordinates a {\sc sode} defines a system of
differential equations of the form
$$
    \frac{\d t}{\d r} = f(t, q, \dot q, s)\,,\quad \frac{\d^2 q^i}{\d r^2} = G^i(t, q,\dot q,s) \,, \quad \frac{\d s}{\d r} = g(t, q,\dot q,s)  \:.
$$

\subsection{Cocontact Lagrangian systems}\label{sec:cocontact-lagrangian-systems}

\begin{dfn}\label{dfn:lagrangian}
    A \textbf{Lagrangian function} is a function $\L \colon\R\times\T Q\times\R\to\R$. The \textbf{Lagrangian energy} associated to $\L$ is the function $E_\L := \Delta(\L)-\L\in\Cinfty(\R\times\T Q\times\R)$. The \textbf{Cartan forms} associated to $\L$ are defined as
	\begin{equation}\label{eq:thetaL}
		\theta_\L = \transp{\cal J}\circ\d\L\in \Omega^1(\R\times\T Q\times\R)\,,\quad\omega_\L = -\d\theta_\L\in \Omega^2(\R\times\T Q\times\R) \,.
	\end{equation}
	The \textbf{contact Lagrangian form} is
	$$ \eta_\L=\d s-\theta_\L\in\Omega^1(\R\times\T Q\times\R)\,. $$
	Notice that $\d\eta_\L=\omega_\L$. The couple $(\R\times\T Q\times\R,\L)$ is a \textbf{cocontact Lagrangian system}.
\end{dfn}

Taking natural coordinates $(t, q^i, v^i, s)$ in $\R\times\T Q\times\R$, the form $\eta_\L$ is written as
\begin{equation}\label{eq:etaL}
	\eta_\L = \d s - \frac{\partial\L}{\partial v^i} \,\d q^i \:,
\end{equation}
and consequently
\begin{equation*}
	\d\eta_\L = -\parderr{\L}{t}{v^i}\d t\wedge\d q^i -\parderr{\L}{q^j}{v^i}\d q^j\wedge\d q^i -\parderr{\L}{v^j}{v^i}\d v^j\wedge\d q^i -\parderr{\L}{s}{v^i}\d s\wedge\d q^i \ .
\end{equation*}

Notice that, in general, given a cocontact Lagrangian system $(\R\times\T Q\times\R,\L)$, the family $(\R\times\T Q\times\R,\tau=\d t,\eta_\L, E_\L)$ is not a cocontact Hamiltonian system because condition $\tau\wedge\eta\wedge(\d\eta_\L)^n\neq 0$ is not fulfilled. The Legendre map characterizes which Lagrangian functions will give cocontact Hamiltonian systems.

\begin{dfn}
    Given a Lagrangian $\L \colon\R\times \T Q\times\R \to \R$, its \textbf{Legendre map} is the fibre derivative of~$\L$, considered as a function on the vector bundle $\tau_0 \colon\R\times \T Q\times\R \to \R\times Q \times \R$; that is, the map $\F\L \colon \R\times\T Q \times\R \to \R\times\cT Q \times \R$ given by
    $$
        \F\L (t,v_q,s) = \left( t,\F\L(t,\cdot,s) (v_q),s \right)\,,
    $$
    where $\F\L(t,\cdot,s)$ is the usual Legendre map associated to the Lagrangian $\L(t,\cdot,s)\colon\T Q\to\R$ with $t$ and $s$ freezed.
\end{dfn}

Notice that the Cartan forms can also be defined as
$$
    \theta_\L={\cal FL}^{\;*}(\pi^*\theta_0)\,,\quad\omega_\L={\cal FL}^{\;*}(\pi^*\omega_0)\,,
$$
where $\theta_0$ and $\omega_0 = -\d\theta_0$ are the canonical 1- and 2-forms of the cotangent bundle and $\pi$ is the natural projection $\pi\colon\R\times\cT Q\times\R\to\cT Q$ (see Example \ref{ex:canonical-cocontact-manifold}).

\begin{prop}\label{prop:regular-Lagrangian}
    For a Lagrangian function $\L$ the following conditions are equivalent:
    \begin{enumerate}[{\rm(i)}]
        \item The Legendre map $\F \L$ is a local diffeomorphism.
        \item The fibre Hessian $\F^2\L \colon\R\times\T Q\times\R \longrightarrow (\R\times\cT Q\times\R)\otimes (\R\times\cT Q\times\R)$ of $\L$ is everywhere nondegenerate (the tensor product is of vector bundles over $\R\times Q \times \R$).
        \item $(\R\times\T Q\times\R,\d t, \eta_\L)$ is a cocontact manifold.
    \end{enumerate}
\end{prop}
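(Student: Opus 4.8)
The plan is to work in natural coordinates $(t,q^i,v^i,s)$ and to prove the chain of equivalences (i) $\Leftrightarrow$ (ii) $\Leftrightarrow$ (iii). All three conditions are pointwise in nature (local diffeomorphism via the inverse function theorem, nondegeneracy of a bilinear form, non-vanishing of a top-degree form), so it suffices to verify them at an arbitrary point, and the single quantity controlling everything will be the fibre Hessian matrix $W_{ij}=\parderr{\L}{v^i}{v^j}$, whose determinant I will track throughout.

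For (i) $\Leftrightarrow$ (ii), I would write the Legendre map in coordinates as $\F\L(t,q^i,v^i,s)=(t,q^i,\partial\L/\partial v^i,s)$ and compute its Jacobian. Since the $t$, $q^i$ and $s$ coordinates are carried to themselves, the Jacobian is block-structured: after clearing the $\partial^2\L/\partial v^j\partial t$, $\partial^2\L/\partial v^j\partial q^i$ and $\partial^2\L/\partial v^j\partial s$ entries by row operations against the $t$-, $q$- and $s$-rows, it becomes block-diagonal with the blocks $1$, $I_n$, $W_{ij}$, $1$, so its determinant equals $\det(W_{ij})$. By the inverse function theorem, $\F\L$ is a local diffeomorphism at a point precisely when this determinant is nonzero there, which is exactly the nondegeneracy of the fibre Hessian $\F^2\L$ (the vertical second derivative, represented by $W_{ij}$). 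This establishes (i) $\Leftrightarrow$ (ii).

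For (ii) $\Leftrightarrow$ (iii), I would compute $\d t\wedge\eta_\L\wedge(\d\eta_\L)^{\wedge n}$ using the coordinate expression \eqref{eq:etaL} for $\eta_\L$ together with the displayed formula for $\d\eta_\L$. The essential observation is that wedging with $\d t$ kills every term of $\d\eta_\L$ containing $\d t$, and among the surviving terms only $-W_{ij}\,\d v^j\wedge\d q^i$ carries a factor $\d v$. To produce the volume element one needs all $n$ differentials $\d v^i$, so each of the $n$ copies of $\d\eta_\L$ must contribute precisely its $\d v\wedge\d q$ part; this simultaneously exhausts the $\d q^i$, forcing the $\d s$ to be supplied by $\eta_\L$. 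The standard wedge-power identity then yields
$$ \d t\wedge\eta_\L\wedge(\d\eta_\L)^{\wedge n} = \pm\,n!\,\det(W_{ij})\;\d t\wedge\d q^1\wedge\cdots\wedge\d q^n\wedge\d v^1\wedge\cdots\wedge\d v^n\wedge\d s\,, $$
so this is a volume form if and only if $\det(W_{ij})\neq 0$ everywhere, i.e. iff the fibre Hessian is everywhere nondegenerate, closing the equivalence with (iii).

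The block-determinant computation and the appeal to the inverse function theorem in the first equivalence are routine. The main obstacle is the bookkeeping in the second equivalence: one must argue cleanly that no selection of terms other than the $n$-fold choice of the $\d v\wedge\d q$ part can contribute to the top form, and then correctly keep track of the combinatorial factor $n!$ and the sign produced by reordering the $\d v$ and $\d q$ factors. Once this selection argument is made explicit, the identification of the leading coefficient with $\det(\F^2\L)$ is immediate.
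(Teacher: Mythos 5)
Your proof is correct and follows essentially the same route as the paper: working in natural coordinates and reducing all three conditions to the regularity of the Hessian matrix $W_{ij}=\parderr{\L}{v^i}{v^j}$. The paper merely states this reduction without detail, whereas you spell out the Jacobian block computation and the wedge-power identity $\d t\wedge\eta_\L\wedge(\d\eta_\L)^{\wedge n}=\pm\,n!\det(W_{ij})\,\d t\wedge\d q^1\wedge\cdots\wedge\d v^n\wedge\d s$, both of which are carried out correctly.
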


The proof of this result can be easily done using natural coordinates, where
$$
    \F\L:(t,q^i,v^i, s) \longrightarrow  \left(t,q^i,\parder{L}{v^i}, s\right)\,,
$$
and
$$
    \F^2 \L(t,q^i,v^i,s) = (t,q^i,W_{ij},s)\,,\quad\mbox{with }\ W_{ij} = \left( \parderr{\L}{v^i}{v^j}\right)\,.
$$
And the three conditions in the above proposition are equivalent to say that the matrix $W= (W_{ij})$ is everywhere regular, hence they are equivalent.


\begin{dfn}
    A Lagrangian function $\L$ is said to be \textbf{regular} if the equivalent conditions in Proposition \ref{prop:regular-Lagrangian} hold. Otherwise $\L$ is called a \textbf{singular} Lagrangian. In particular, $\L$ is said to be \textbf{hyperregular} if $\F\L$ is a global diffeomorphism.
\end{dfn}

\begin{rmrk}\rm
    As a result of the preceding definitions and results, every \emph{regular} cocontact Lagrangian system has associated the cocontact Hamiltonian system $(\R\times\T Q\times\R,\d t, \eta_\L, E_\L)$.
\end{rmrk}

Given a regular cocontact Lagrangian system $(\R\times\T Q\times\R,\L)$,
from Proposition \ref{prop:Reeb-vector-fields} we have that
the \textbf{Reeb vector fields} 
$R_t^\L,R_s^\L\in\X(\R\times\T Q\times\R)$ 
for this system are uniquely determined by the relations
$$ \begin{dcases}
		i(R_t^\L)\d t = 1\,,\\
		i(R_t^\L)\eta_\L = 0\,,\\
		i(R_t^\L)\d\eta_\L = 0\,,
	\end{dcases}\qquad\begin{dcases}
		i(R_s^\L)\d t = 0\,,\\
		i(R_s^\L)\eta_\L = 1\,,\\
		i(R_s^\L)\d\eta_\L = 0\,.
	\end{dcases} $$
Their local expressions are
\begin{align*}
    R_t^\L &= \parder{}{t} - W^{ij}\parderr{\L}{t}{v^j}\parder{}{v^i}\,,\\
    R_s^\L &= \parder{}{s} - W^{ij}\parderr{\L}{s}{v^j}\parder{}{v^i}\,,
\end{align*}
where $W^{ij}$ is the inverse of the Hessian matrix of the Lagrangian $\L$, namely $W^{ij}W_{jk}=\delta^i_k$.

Notice that if the Lagrangian function $\L$ is singular, the Reeb vector fields are not uniquely determined. This will be discussed in Section \ref{sec:singular}.

\subsection{The Herglotz--Euler--Lagrange equations}

\begin{dfn}\label{dfn:Euler-Lagrange-equations}
	Let $(\R\times\T Q\times\R,\L)$ be a regular contact Lagrangian system.

	The \textbf{Herglotz--Euler--Lagrange equations} for a holonomic curve ${\bf\tilde c}\colon I\subset\R \to\R\times\T Q\times\R$ are
	\begin{equation}\label{eq:Euler-Lagrange-curve}
		\begin{dcases}
			i({\bf\tilde c}')\d\eta_\L = \left(\d E_\L - (\Lie_{R_t^\L}E_\L)\d t - (\Lie_{R_s^\L}E_\L)\eta_\L\right)\circ{\bf\tilde c}\,,\\
			i({\bf\tilde c}')\eta_\L = - E_\L\circ{\bf\tilde c}\,,\\
			i({\bf\tilde c}')\d t = 1\,,
		\end{dcases}
	\end{equation}
	where ${\bf\tilde c}'\colon I\subset\R\to\T(\R\times\T Q\times\R)$ denotes the canonical lifting of ${\bf\tilde c}$ to $\T(\R\times\T Q\times\R)$.
	
	The \textbf{cocontact Lagrangian equations} for a vector field $X_\L\in\X(\R\times\T Q\times\R)$ are 
	\begin{equation}\label{eq:Euler-Lagrange-field}
		\begin{cases}
			i(X_\L)\d\eta_\L = \d E_\L-(\Lie_{R_t^\L}E_\L)\d t-(\Lie_{R_s^\L}E_\L)\eta_\L\,,\\
			i(X_\L)\eta_\L = -E_\L \,,\\
			i(X_\L)\d t = 1\,.
		\end{cases}
	\end{equation}
	The vector field which is the only solution to these equations is called the \textbf{cocontact Lagrangian vector field}.
\end{dfn}

 Notice that a cocontact Lagrangian vector field is a cocontact Hamiltonian vector field for the function $E_\L$ (and the cocontact structure $(\d t,\eta_\L)$).

In natural coordinates, for a holonomic curve
${\bf\tilde c}(r)=(t(r), q^i(r),\dot q^i(r),s(r))$,
equations \eqref{eq:Euler-Lagrange-curve} are
\begin{align}
	\dot t &= 1\,,\label{eq:EL-curve-coords-1}\\
	\dot s &= \L\,,\label{eq:EL-curve-coords-2}\\
	\dot t\parderr{\L}{t}{v^i} + \dot q^j\parderr{\L}{q^j}{v^i} + \ddot q^j\parderr{\L}{v^j}{v^i} + \dot s\parderr{\L}{s}{v^i} - \parder{\L}{q^i} = \frac{\d}{\d r}\left(\parder{\L}{v^i}\right) - \parder{\L}{q^i} &= \parder{\L}{s}\parder{\L}{v^i}\,.\label{eq:EL-curve-coords-3}
\end{align}
Condition \eqref{eq:EL-curve-coords-1} implies that $t(r) = r + k$. This justifies the usual identification $t = r$. Meanwhile, for a vector field $X_\L = f\dparder{}{t} + F^i\dparder{}{q^i} + G^i\dparder{}{v^i} + g\dparder{}{s}$, equations \eqref{eq:Euler-Lagrange-field} read
\begin{align}
	(F^j - v^j)\parderr{\L}{t}{v^j} &= 0\,,\label{eq:EL-field-coords-1}\\
	f\parderr{\L}{t}{v^i} + F^j\parderr{\L}{q^j}{v^i} + G^j\parderr{\L}{v^j}{v^i} + g\parderr{\L}{s}{v^i} - \parder{\L}{q^i} - (F^j - v^j)\parderr{\L}{q^i}{v^j} &= \parder{\L}{s}\parder{\L}{v^i}\,,\label{eq:EL-field-coords-2}\\
	(F^j - v^j)\parderr{\L}{v^i}{v^j} &= 0\,,\label{eq:EL-field-coords-3}\\
	(F^j - v^j)\parderr{\L}{s}{v^j} &= 0\,,\label{eq:EL-field-coords-4}\\
	\L + \parder{\L}{v^j}(F^j-v^j) - g &= 0\,,\label{eq:EL-field-coords-5}\\
	f &= 1\label{eq:EL-field-coords-6}\,,
\end{align}
where we have used the relations
\begin{equation*}
    \Lie_{R_t^\L} E_\L = - \parder{\L}{t}\ ,\quad\Lie_{R_s^\L} E_\L = - \parder{\L}{s}\,,
\end{equation*}
which can be easily proved taking coordinates.

\begin{thm}\label{thm:regular-lagrangian}
    If $\L$ is a regular Lagrangian, then $X_\L$ is a {\sc sode} and equations \eqref{eq:EL-field-coords-1}--\eqref{eq:EL-field-coords-6} become
    \begin{align}
        f &= 1\,,\label{eq:regular-EL-field-coords-1}\\
        g &= \L\,,\label{eq:regular-EL-field-coords-2}\\
        \parderr{\L}{t}{v^i} + v^j\parderr{\L}{q^j}{v^i} + G^j\parderr{\L}{v^j}{v^i} + \L\parderr{\L}{s}{v^i} - \parder{\L}{q^i}
        &= \parder{\L}{s}\parder{\L}{v^i}\,,\label{eq:regular-EL-field-coords-3}
    \end{align}
    which, for the integral curves of $X_\L$, are the \textbf{Herglotz--Euler--Lagrange equations} \eqref{eq:EL-curve-coords-1}, \eqref{eq:EL-curve-coords-2} and \eqref{eq:EL-curve-coords-3}.

    This {\sc sode} $X_\L\equiv\Gamma_\L$ is called the \textbf{Herglotz--Euler--Lagrange vector field} associated to the Lagrangian function $\L$.
\end{thm}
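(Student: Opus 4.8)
The plan is to argue entirely in the natural coordinates $(t,q^i,v^i,s)$, exploiting the fact that regularity of $\L$ is precisely the everywhere-invertibility of the Hessian matrix $W=(W_{ij})$ with $W_{ij}=\partial^2\L/\partial v^i\partial v^j$. The existence and uniqueness of $X_\L$ is not at issue: since $\L$ is regular, Proposition~\ref{prop:regular-Lagrangian} guarantees that $(\R\times\T Q\times\R,\d t,\eta_\L)$ is a cocontact manifold, so $X_\L$ is nothing but the cocontact Hamiltonian vector field of $E_\L$ for this structure, whose existence and uniqueness follow from the flat isomorphism exactly as in the Hamiltonian formalism. What remains is to show that this unique $X_\L$ is a {\sc sode} and that its defining equations take the reduced form \eqref{eq:regular-EL-field-coords-1}--\eqref{eq:regular-EL-field-coords-3}.

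The crux is the second-order condition. Among the coordinate equations \eqref{eq:EL-field-coords-1}--\eqref{eq:EL-field-coords-6}, the decisive one is \eqref{eq:EL-field-coords-3}, which reads $(F^j-v^j)W_{ij}=0$. Because $\L$ is regular, $W$ is everywhere nondegenerate, so contracting with the inverse matrix $W^{ki}$ gives $F^k-v^k=0$, that is $F^i=v^i$. By the characterization ${\cal J}\circ X_\L=\Delta$ of a {\sc sode}, and since in coordinates ${\cal J}\circ X_\L=F^i\,\partial/\partial v^i$ while $\Delta=v^i\,\partial/\partial v^i$, this is exactly the statement that $X_\L$ is a {\sc sode}.

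Once $F^i=v^i$ is established, the remaining equations simplify by inspection. Equations \eqref{eq:EL-field-coords-1} and \eqref{eq:EL-field-coords-4} carry the factor $F^j-v^j$ and therefore become trivial; \eqref{eq:EL-field-coords-6} is already $f=1$, yielding \eqref{eq:regular-EL-field-coords-1}; and \eqref{eq:EL-field-coords-5} collapses to $g=\L$, yielding \eqref{eq:regular-EL-field-coords-2}. In \eqref{eq:EL-field-coords-2} the last summand $(F^j-v^j)\,\partial^2\L/\partial q^i\partial v^j$ vanishes, and substituting $f=1$, $F^j=v^j$ and $g=\L$ turns it into \eqref{eq:regular-EL-field-coords-3}.

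Finally, for the statement about integral curves I would use that, $X_\L$ being a {\sc sode}, its integral curves are holonomic (Definition~\ref{dfn:holonomic-path}), so an integral curve $(t(r),q^i(r),v^i(r),s(r))$ satisfies $\dot q^i=v^i$ and $\ddot q^i=\dot v^i=G^i$. Together with $\dot t=f=1$ and $\dot s=g=\L$ these give \eqref{eq:EL-curve-coords-1} and \eqref{eq:EL-curve-coords-2}, and inserting $\dot q^j=v^j$, $\ddot q^j=G^j$, $\dot s=\L$ into \eqref{eq:regular-EL-field-coords-3} reproduces \eqref{eq:EL-curve-coords-3}. There is no genuine obstacle here: the entire content is the single observation that regularity enters exactly through the invertibility of $W$ in \eqref{eq:EL-field-coords-3}, which forces the second-order condition; everything after that is substitution and bookkeeping.
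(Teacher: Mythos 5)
Your proposal is correct and follows essentially the same route as the paper's own proof: both arguments work in natural coordinates, use the invertibility of the Hessian $W_{ij}$ in equation \eqref{eq:EL-field-coords-3} to force $F^i=v^i$ (the {\sc sode} condition), and then reduce the remaining equations \eqref{eq:EL-field-coords-1}, \eqref{eq:EL-field-coords-2}, \eqref{eq:EL-field-coords-4}--\eqref{eq:EL-field-coords-6} by substitution to obtain \eqref{eq:regular-EL-field-coords-1}--\eqref{eq:regular-EL-field-coords-3} and, along integral curves, the Herglotz--Euler--Lagrange equations. Your write-up is somewhat more detailed than the paper's (which is quite terse), but there is no difference in method.
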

\begin{proof}
    This results follows from the coordinate expressions. If $\L$ is a regular Lagrangian, equations \eqref{eq:EL-field-coords-3} lead to $F^i=v^i$, which are the {\sc sode} condition for the vector field $X_\L$. Then, \eqref{eq:EL-field-coords-1} and \eqref{eq:EL-field-coords-4} hold identically, and \eqref{eq:EL-field-coords-2}, \eqref{eq:EL-field-coords-5} and \eqref{eq:EL-field-coords-6} give the equations \eqref{eq:regular-EL-field-coords-1}, \eqref{eq:regular-EL-field-coords-2} and \eqref{eq:regular-EL-field-coords-3} or, equivalently, for the integral curves of $X_\L$, the Herglotz--Euler--Lagrange equations \eqref{eq:EL-curve-coords-1}, \eqref{eq:EL-curve-coords-2} and \eqref{eq:EL-curve-coords-3}.
\end{proof}

Hence, the local expression of the Herlgotz--Euler--Lagrange vector field is
\begin{equation*}
    \Gamma_\L = \parder{}{t} + v^i\parder{}{q^i} + W^{ji}\left( \parder{\L}{q^j} - \parderr{\L}{t}{v^j} - v^k\parderr{\L}{q^k}{v^j} - \L\parderr{\L}{s}{v^j} + \parder{\L}{s}\parder{\L}{v^j} \right)\parder{}{v^i} + \L\parder{}{s}\,.
\end{equation*}
An integral curve of this vector field satisfies the classical Euler--Lagrange equation for dissipative systems, as can be easily verified:
$$ \frac{\d}{\d t}\left(\parder{\L}{v^i}\right) - \parder{\L}{q^i} =\parder{\L}{s}\parder{\L}{v^i}\,,\qquad
\dot s = \L\,.
$$

\begin{rmrk}\rm
    It is interesting to point out how, in the Lagrangian formalism of dissipative systems, the expression in coordinates \eqref{eq:EL-curve-coords-2} of the second Lagrangian equation \eqref{eq:Euler-Lagrange-curve} relates the variation of the ``dissipation coordinate'' $s$ to the Lagrangian function and, from here, we can identify this coordinate with the Lagrangian action, $\displaystyle s=\int\L\,\d t$.
\end{rmrk}

\begin{rmrk}\rm Equations \eqref{eq:EL-curve-coords-2} \eqref{eq:EL-curve-coords-3} coincide with those derived from the Herglotz variational principle \cite{Her1930,Gue1996,Geo2003}.

\end{rmrk}

\section{The singular case}\label{sec:singular}

If the Hessian of the Lagrangian is not regular (in other words, the Lagrangians is singular), the construction of section \ref{sec:cocontact-lagrangian-systems} does not lead to a cocontact structure, as shown in \ref{prop:regular-Lagrangian}. In order to describe the dynamics of singular Lagrangians we require some minimal regularity conditions. Inspired by previous works on the singular case for contact systems \cite{DeLeo2019} and cosymplectic systems \cite{Chi1994} we will define precocontact structures. 

\subsection{Precocontact structures}

\begin{dfn}
    Let $M$ be a smooth manifold and consider two one-forms $\tau,\eta\in\Omega^1(M)$. We define the \textbf{characteristic distribution} of $(\tau,\eta)$ as
    $$ \C = \ker\tau\cap\ker\eta\cap\ker\d\eta\,. $$
    If $\C$ has constant rank, we say that the couple $(\tau,\eta)$ is of \textbf{class} $\cl(\tau,\eta) = \dim M - \rk \C$.
\end{dfn}

\begin{dfn}
    Consider a smooth manifold $M$. The couple $(\tau,\eta)$, where $\tau,\eta\in\Omega^1(M)$, is a \textbf{precocontact structure} on $M$ if $\d\tau = 0$, the characteristic distribution of the couple $(\tau,\eta)$ has constant rank and $\cl(\tau,\eta) = 2r+2\geq 2$. The triple $(M,\tau,\eta)$ will be called a \textbf{precocontact manifold}.
\end{dfn}

\begin{thm}[Darboux theorem]\label{thm:darbouxSing}
    Let $(M,\tau,\eta)$ be a precocontact manifold with $\dim M = m$. Then, around every point $p\in M$, there is a local chart $(U; t, q^i, p_i, s, u^j)$, where $1\leq i \leq r$, $1\leq j\leq c$, $2r+2+c=m$, such that
    $$ \restr{\tau}{U} = \d t\ ,\quad\restr{\eta}{U} = \d s - p_i\d q^i\ ,\quad \C = \left\langle \parder{}{u^j} \right\rangle\,. $$
\end{thm}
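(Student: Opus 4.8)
The plan is to reduce the statement to the regular Darboux theorem (Theorem~\ref{thm:Darboux-cocontact}) by passing to the local leaf space of the characteristic distribution $\C$. Write $c=\rk\C$, so that $\dim M=2r+2+c$ and the leaf space will have dimension $2r+2$. First I would show that $\C=\ker\tau\cap\ker\eta\cap\ker\d\eta$ is involutive. For sections $Z,Z'$ of $\C$ one has $i_Z\tau=i_Z\eta=0$ and $i_Z\d\eta=0$ (and likewise for $Z'$); hence $\tau([Z,Z'])=-\d\tau(Z,Z')=0$ because $\d\tau=0$, $\eta([Z,Z'])=-\d\eta(Z,Z')=0$, and, using $i_{[Z,Z']}=\Lie_Z i_{Z'}-i_{Z'}\Lie_Z$ together with $\Lie_Z\d\eta=\d(i_Z\d\eta)=0$, also $i_{[Z,Z']}\d\eta=0$. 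Thus $[Z,Z']$ is again a section of $\C$. Since $\C$ has constant rank $c$, Frobenius' theorem produces a foliation, and shrinking $U$ around $p$ I may assume it is simple, with a smooth projection $\pi\colon U\to\bar M:=U/\C$ onto a manifold of dimension $2r+2$.

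Next I would verify that $\tau$ and $\eta$ are basic for this foliation, hence descend to $\bar M$. Indeed $i_Z\tau=i_Z\eta=0$ by definition of $\C$, while Cartan's formula gives $\Lie_Z\tau=\d i_Z\tau+i_Z\d\tau=0$ and $\Lie_Z\eta=\d i_Z\eta+i_Z\d\eta=0$. Therefore there are unique forms $\bar\tau,\bar\eta\in\Omega^1(\bar M)$ with $\tau=\pi^\ast\bar\tau$ and $\eta=\pi^\ast\bar\eta$, and consequently $\d\bar\tau=0$ and $\d\eta=\pi^\ast\d\bar\eta$.

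The key step is to see that $(\bar M,\bar\tau,\bar\eta)$ is a genuine cocontact manifold, i.e.\ that $\bar\tau\wedge\bar\eta\wedge(\d\bar\eta)^{\wedge r}$ is a volume form on the $(2r+2)$-dimensional $\bar M$; equivalently, that its characteristic distribution $\ker\bar\tau\cap\ker\bar\eta\cap\ker\d\bar\eta$ vanishes. If $\bar v$ belongs to it at $\bar p=\pi(p)$, I choose $v\in\T_pM$ with $\d\pi(v)=\bar v$; then $\tau(v)=\bar\tau(\bar v)=0$, $\eta(v)=\bar\eta(\bar v)=0$, and since $\d\eta=\pi^\ast\d\bar\eta$ one gets $\d\eta(v,w)=\d\bar\eta(\bar v,\d\pi(w))=0$ for every $w$, so $v\in\C_p=\ker\d\pi_p$ and hence $\bar v=0$. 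This is precisely the nondegeneracy that was unavailable upstairs and is recovered after collapsing $\C$; checking it carefully, and ensuring $U$ is taken small enough that $\bar M$ is actually a manifold, is the part that demands the most attention.

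Finally I would apply Theorem~\ref{thm:Darboux-cocontact} to $(\bar M,\bar\tau,\bar\eta)$, obtaining coordinates $(t,q^i,p_i,s)$ with $1\le i\le r$ in which $\bar\tau=\d t$ and $\bar\eta=\d s-p_i\d q^i$. Pulling these back along $\pi$ and completing them to a chart $(U;t,q^i,p_i,s,u^j)$, with $1\le j\le c$, by adjoining functions $u^j$ that parametrise the leaves, I obtain $\restr{\tau}{U}=\d t$ and $\restr{\eta}{U}=\d s-p_i\d q^i$. Since the first $2r+2$ coordinates are $\pi$-basic while the $u^j$ vary along the fibres of $\pi$, the fibre tangent spaces are $\langle\parder{}{u^j}\rangle$, and as $\C=\ker\d\pi$ this gives $\C=\langle\parder{}{u^j}\rangle$, as required.
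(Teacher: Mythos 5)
Your quotienting strategy is genuinely different from the paper's proof (which first writes $\tau=\d t$, restricts $\eta$ to the leaves of $\ker\tau$, and invokes Godbillon's Darboux theorem for a single one-form of constant odd class), and most of your steps are carried out correctly: the involutivity of $\C$, the fact that $\tau$ and $\eta$ are basic and descend to $\bar\tau,\bar\eta$ on the local leaf space, and the computation showing that the reduced pair has vanishing characteristic distribution are all fine. The gap is the word ``equivalently'' in your key step: on a $(2r+2)$-dimensional manifold, the vanishing of $\ker\bar\tau\cap\ker\bar\eta\cap\ker\d\bar\eta$ is \emph{not} equivalent to $\bar\tau\wedge\bar\eta\wedge(\d\bar\eta)^{\wedge r}$ being a volume form; only the implication ``volume form $\Rightarrow$ zero characteristic distribution'' holds (by contracting the top form with a vector in the triple kernel). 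Concretely, take $\bar M=\R^4$ with coordinates $(t,x,y,z)$, $\bar\tau=\d t$ and $\bar\eta=(1+z)\,\d t+x\,\d y$. Then $\d\bar\eta=\d z\wedge\d t+\d x\wedge\d y$ is nondegenerate, so the characteristic distribution is zero, yet $\bar\tau\wedge\bar\eta\wedge\d\bar\eta = x\,\d t\wedge\d y\wedge(\d z\wedge\d t+\d x\wedge\d y)\equiv 0$. Hence what you actually prove does not show that $(\bar M,\bar\tau,\bar\eta)$ is cocontact, and the final appeal to Theorem \ref{thm:Darboux-cocontact} is unjustified.

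You should also be aware that this gap cannot be closed from the stated hypotheses: the example above is itself a precocontact manifold in the sense of the paper ($\d\tau=0$, $\C=0$ of constant rank, class $4=2r+2$ with $r=1$), and the conclusion of Theorem \ref{thm:darbouxSing} fails for it, since in Darboux coordinates one would have $\tau\wedge\eta\wedge(\d\eta)^{\wedge r}\neq 0$. In other words, constancy and evenness of the class alone are too weak; one must additionally require, say, that $\tau\wedge\eta\wedge(\d\eta)^{\wedge r}$ vanishes nowhere, or that $\restr{\eta}{\F_0}$ has constant odd class on each leaf of $\ker\tau$. The paper's own proof hides the same difficulty in the assertion that ``it is clear that $\C=\C_{\restr{\eta}{\F_0}}$'', which fails in this example: there $\C=0$, while the leafwise characteristic space is spanned by $\partial/\partial z$ (the leafwise class is $2$, even, so Godbillon's theorem does not apply either). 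Under such a strengthened hypothesis your argument does close up, and arguably more cleanly than the paper's: since $\d\pi$ is fiberwise surjective, $\pi^\ast$ is injective on forms, so $\tau\wedge\eta\wedge(\d\eta)^{\wedge r}=\pi^\ast\bigl(\bar\tau\wedge\bar\eta\wedge(\d\bar\eta)^{\wedge r}\bigr)$ is nonvanishing if and only if the reduced form is, which makes $(\bar M,\bar\tau,\bar\eta)$ genuinely cocontact and renders the remainder of your proof correct.
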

\begin{proof}
    Since the one-form $\tau$ is closed, we have that locally $\tau = \d t$. Moreover, $\ker\tau$ is an integrable distribution and thus induces a foliation $\mathcal{F}$ of $M$. Consider the leaf $\mathcal{F}_0$ corresponding to $t = t_0$. We want to see that the class of $\restr{\eta}{\mathcal{F}_0}$ is odd (the definition of class of a one-form can be found in Definition VI.1.1 in \cite{God1969}). It is clear that, on a point $p\in\mathcal{F}_0$, $\C= \C_{\restr{\eta}{\mathcal{F}_0}}$. Hence,
    $$ \cl(\restr{\eta}{\mathcal{F}_0}) = m - 1 - \rk \C_{\restr{\eta}{\mathcal{F}_0}} = m - 1 - \rk\C_{t_0} = \cl(\tau,\eta) - 1\,, $$
    and since $\cl(\tau,\eta)\geq 2$ is even, we have that $\cl(\restr{\eta}{\mathcal{F}_0})\geq 1$ is odd. Now, using Theorem VI.4.1 in \cite{God1969} we obtain the desired result.
\end{proof}



\begin{rmrk}\rm
    If $\cl(\tau,\eta)=m$ then $m$ must be even and we recover the cocontact structure introduced in Definition \ref{dfn:cocontact-manifold}.
\end{rmrk}

Given a precocontact manifold $(M,\tau,\eta)$, the morphism $\flat$ can be defined in the same way as in the regular case:
$$
	\begin{array}{rccl}
		\flat\colon & \T M & \longrightarrow & \cT M \\
		& v & \longmapsto & (i(v)\tau)\tau + i(v)\d\eta + \left(i(v)\eta\right)\eta\,.
	\end{array}
$$
However, in this case it is not an isomorphism (see Proposition \ref{prop:kernelflat}).

\begin{dfn}\label{dfn:ReebSing}
	A vector field $R_s\in\mathfrak{X}(M)$ is a \textbf{contact Reeb vector field} if
	$$
		\flat(R_s)=\eta\,.
	$$
	A vector field $R_t\in\mathfrak{X}(M)$ is a \textbf{time Reeb vector field} if
	$$
		\flat(R_t)=\tau\,.
	$$
\end{dfn}

Both kind of vector fields always exists global thanks to the theorem \ref{thm:darbouxSing}, and a global one can be constructed using partitions of the unity. Crucially, in the precocontact case the Reeb vector fields are not unique.

\begin{prop}\label{prop:kernelflat}
	Let $(M,\tau,\eta)$ be a precocontact structure on $M$. Then
	$$
		\C = \ker\tau\cap\ker\eta\cap\ker\d\eta = \ker \flat = (\Ima\flat)^\circ\,.
	$$
\end{prop}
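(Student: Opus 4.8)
The plan is to prove the chain $\C = \ker\flat = (\Ima\flat)^\circ$ fiberwise, working at each point $p\in M$ where $\flat$ restricts to the linear map $\flat\colon\T_pM\to\cT_pM$; the constant-rank hypothesis on $\C$ is what guarantees that these pointwise equalities of subspaces assemble into the claimed equality of subbundles.

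First I would establish $\C = \ker\flat$. The inclusion $\C\subseteq\ker\flat$ is immediate: if $v\in\C$ then $i(v)\tau = i(v)\eta = 0$ and $i(v)\d\eta = 0$, so every summand of $\flat(v) = (i(v)\tau)\tau + i(v)\d\eta + (i(v)\eta)\eta$ vanishes. For the reverse inclusion the key device is to pair $\flat(v)$ with $v$ itself. Since $\d\eta$ is a $2$-form, $(i(v)\d\eta)(v) = \d\eta(v,v) = 0$, and therefore
$$ \langle\flat(v),v\rangle = (i(v)\tau)^2 + (i(v)\eta)^2\,. $$
If $\flat(v) = 0$ the left-hand side is zero, and since the right-hand side is a sum of two real squares we conclude $i(v)\tau = 0$ and $i(v)\eta = 0$ separately. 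Substituting back into $\flat(v) = 0$ leaves $i(v)\d\eta = 0$, so $v\in\C$. This self-pairing identity is the one slightly non-obvious step, and it avoids appealing to coordinates; alternatively the same conclusion can be read off directly from the Darboux normal form of Theorem \ref{thm:darbouxSing}.

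Next I would prove $\ker\flat = (\Ima\flat)^\circ$. For the inclusion $\ker\flat\subseteq(\Ima\flat)^\circ$, take $w\in\ker\flat = \C$ and an arbitrary $v\in\T_pM$; then
$$ \langle\flat(v),w\rangle = (i(v)\tau)(i(w)\tau) + \d\eta(v,w) + (i(v)\eta)(i(w)\eta)\,, $$
and all three terms vanish because $i(w)\tau = i(w)\eta = 0$ while $\d\eta(v,w) = -(i(w)\d\eta)(v) = 0$. Hence $w$ annihilates $\Ima\flat$. To upgrade this inclusion to an equality I would count dimensions: rank--nullity gives $\dim\ker\flat = \dim M - \rk\flat$, whereas $\dim(\Ima\flat)^\circ = \dim M - \dim\Ima\flat = \dim M - \rk\flat$. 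The two subspaces thus have equal dimension, so the inclusion forces $\ker\flat = (\Ima\flat)^\circ$.

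The only genuine obstacle is the reverse inclusion $\ker\flat\subseteq\C$: because $\flat$ is assembled from a non-symmetric combination of $\tau\otimes\tau$, $\d\eta$ and $\eta\otimes\eta$ (the middle term is skew while the outer two are symmetric), one cannot simply invoke a symmetric-form kernel computation, and the self-pairing identity is exactly what sidesteps this. Once that inclusion is in hand, the dimension count is what lets me avoid computing $(\Ima\flat)^\circ$ explicitly, closing the proof cleanly.
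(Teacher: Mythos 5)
Your proof is correct, and its key step is genuinely different from the paper's. For the crucial inclusion $\ker\flat\subseteq\C$, the paper contracts the equation $\flat(X)=0$ with the Reeb vector fields $R_s$ and $R_t$ to extract $i(X)\eta=0$ and $i(X)\tau=0$; this relies on the prior existence of (non-unique) Reeb vector fields, which in the precocontact setting is itself justified through the Darboux theorem (Theorem \ref{thm:darbouxSing}) and partitions of unity. Your self-pairing identity $\flat(v)(v)=(i(v)\tau)^2+(i(v)\eta)^2$, which exploits $\d\eta(v,v)=0$ and the fact that a vanishing sum of real squares has vanishing summands, replaces that contraction entirely: it is pointwise linear algebra, needs no Reeb vector fields and no normal form, and hence would let the proposition be proved independently of (indeed before) Theorem \ref{thm:darbouxSing} and Definition \ref{dfn:ReebSing}. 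What the paper's route buys instead is economy within its own development: the Reeb fields are already available and the computation parallels the regular cocontact case. For the second equality, $\ker\flat=(\Ima\flat)^\circ$, your argument and the paper's coincide: the inclusion $\ker\flat\subseteq(\Ima\flat)^\circ$ follows from the same one-line computation (all three terms of $\flat(v)(w)$ vanish when $w\in\C$, the middle one by skew-symmetry of $\d\eta$), and equality follows from the rank--nullity dimension count. One cosmetic remark: your opening claim that the constant-rank hypothesis is what assembles the pointwise statements into the proposition is harmless but not needed — the three sets are equal fiberwise regardless; constant rank only ensures that $\C$ is a genuine distribution.
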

\begin{proof}
    We will begin by proving that $\ker\tau\cap\ker\eta\cap\ker\d\eta \supseteq \ker\flat$. Let $X\in\ker\flat$, then
    \begin{equation}\label{eq:flatX}
        \flat(X) = 0\,.    
    \end{equation}
    Let $R_s,R_t$ be a contact and time Reeb vector fields respectively. Contracting both sides of equation \eqref{eq:flatX} with $R_s$, we have
    \begin{align*}
        0 &= i(R_s)i(X)\d\eta + i(R_s)((i(X)\eta)\eta) + i(R_s)((i(X)\tau)\tau)\\
        &= -i(X)(i(R_s)\d\eta)) + (i(X)\eta)(i(R_s)\eta) + (i(X)\tau)(i(R_s)\tau)\\
        &= i(X)\eta\,.
    \end{align*}
    On the other hand, contracting with $R_t$, we obtain
    \begin{align*}
        0 &= i(R_t)i(X)\d\eta + i(R_t)((i(X)\eta)\eta) + i(R_t)((i(X)\tau)\tau)\\
        &= -i(X)(i(R_t)\d\eta)) + (i(X)\eta)(i(R_t)\eta) + (i(X)\tau)(i(R_t)\tau)\\
        &= i(X)\tau\,.
    \end{align*}
    Hence, we also have that $i(X)\d\eta = 0$ and then $X\in\ker\tau\cap\ker\eta\cap\ker\d\eta$. The other inclusion is trivial.
    
    Now we will show that $\ker\flat = (\Ima\flat)^\circ$. Let $X\in\ker\flat$. By the first equality, $i(X)\tau = 0$, $i(X)\eta = 0$ and $i(X)\d\eta = 0$. Then, for every vector field $Y$,
    $$ i(X)\flat(Y) = i(X)(i(Y)\tau)\tau + i(X)i(Y)\d\eta + i(X)(i(Y)\eta)\eta = -i(Y)i(X)\d\eta = 0\,, $$
    and hence $\ker\flat\subset(\Ima\flat)^\circ$. Now, as it is clear that both subspaces of $\T_p M$ have the same dimension, we have that $\ker\flat=(\Ima\flat)^\circ$.
\end{proof}

Notice that if $R_1$ and $R_2$ are two contact (or two time) Reeb vector fields, then, $R_1-R_2\in \C$. Conversely, if $R$ is a contact (time) Reeb vector field, then $R+V$ is  a contact (or time) Reeb vector field for any $V\in\C$.

\begin{dfn}
	A \textbf{precocontact Hamiltonian system} is a family $(M,\tau,\eta,H)$ where $(M,\tau,\eta)$ is a precocontact manifold and $H\in\Cinfty(M)$ is the \textbf{Hamiltonian function} on $M$.
\end{dfn}

\begin{dfn}
	Let $(M,\tau,\eta,H)$ be a precocontact Hamiltonian system and $R_s$ and $R_t$ two contact and time Reeb vector fields respectively. The \textbf{precocontact Hamiltonian equation} for a vector field $X$ is:
	\begin{equation}\label{eq:Ham-eq-precocontact-vectorfields}
		\flat(X)=\gamma_H\,,	   
	\end{equation}
	where $\gamma_H = \d H-\left(\Lie_{R_s}H+H\right)\eta+\left(1-\Lie_{R_t}H\right)\tau$.
\end{dfn}

Equation \eqref{eq:Ham-eq-precocontact-vectorfields} has two problems in the singular case: $\gamma_H$ depends (a priori) on the Reeb vector fields we have chosen and there may not exist a solution to the equations. Both problems are solved by applying a suitable constraint algorithm.

\subsection{Constraint algorithm}

The aim of the constraint algorithm is to find a submanifold $M_f\subset M$ such that there exists a solution to the equation \eqref{eq:Ham-eq-precocontact-vectorfields} which is tangent to the submanifold $M_f$. The algorithm has two steps: the consistency condition, where we look for the submanifold where the equation has solution, and the tangency condition, where we analyse where the solution is tangent to the submanifold. These steps are applied iteratively, but the first step is specially relevant because it will solve the multiplicity of Reeb vector fields.

Consider a $\gamma_H$ with particular Reeb vector fields $R_s$ and $R_t$. $M_1$ is defined as the subset of the points of $M$ where solutions exist:
$$
	M_1=\{p\in M \mid (\gamma_H)_p\in\flat(T_pM)\}\,.
$$

\begin{thm}
    Let $\bar M_1 = \{p\in M\mid (\Lie_Y H)_p = 0\,,\ \forall Y\in\C \}$. Then $M_1 = \bar M_1$.
\end{thm}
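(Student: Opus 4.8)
The plan is to reduce the consistency condition defining $M_1$, namely $(\gamma_H)_p \in \flat(\T_p M)$, to a pointwise linear-algebra statement by means of the characterization of $\Ima\flat$ in Proposition \ref{prop:kernelflat}, and then to evaluate $\gamma_H$ directly on the characteristic distribution $\C$.

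First I would use Proposition \ref{prop:kernelflat}, which gives $\C = (\Ima\flat)^\circ$. At each point $p$, the subspace $\Ima\flat_p \subseteq \cT_p M$ satisfies the finite-dimensional double-annihilator identity $(\Ima\flat_p)^{\circ\circ} = \Ima\flat_p$. Hence a covector $\alpha \in \cT_p M$ belongs to $\flat(\T_p M) = \Ima\flat_p$ if and only if $\alpha$ annihilates $(\Ima\flat_p)^\circ = \C_p$, that is, $\alpha(Y) = 0$ for every $Y \in \C_p$. Taking $\alpha = (\gamma_H)_p$, this shows that $p \in M_1$ if and only if $(\gamma_H)_p(Y) = 0$ for all $Y \in \C_p$.

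Second, I would compute $\gamma_H$ on $\C$. Since $\C = \ker\tau \cap \ker\eta \cap \ker\d\eta$, any $Y \in \C$ satisfies $i(Y)\tau = 0$ and $i(Y)\eta = 0$. Substituting into $\gamma_H = \d H - (\Lie_{R_s}H + H)\eta + (1 - \Lie_{R_t}H)\tau$, the two Reeb-dependent terms vanish, leaving $\gamma_H(Y) = \d H(Y) = \Lie_Y H$. Therefore the condition $(\gamma_H)_p(Y) = 0$ for all $Y \in \C_p$ is exactly $(\Lie_Y H)_p = 0$ for all $Y \in \C$, which is the defining condition of $\bar M_1$; this gives $M_1 = \bar M_1$.

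The computation itself is elementary, and the only point requiring care is that the ``solutions exist'' condition is genuinely equivalent, pointwise, to $\gamma_H \in \Ima\flat$, which rests on Proposition \ref{prop:kernelflat} together with the double-annihilator identity rather than on any feature of $H$. The conceptual content worth emphasizing is that this is precisely the step where the ambiguity in the choice of Reeb vector fields disappears: although $\gamma_H$ depends on $R_s$ and $R_t$, the terms carrying that dependence are proportional to $\eta$ and $\tau$, both of which annihilate $\C$, so the only part of $\gamma_H$ relevant for membership in $\Ima\flat$ is $\d H$ restricted to $\C$, which is manifestly Reeb-independent.
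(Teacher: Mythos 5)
Your proposal is correct and takes essentially the same route as the paper's proof: both reduce membership in $M_1$ to the pointwise condition $(\gamma_H)_p \in \C_p^\circ$ using Proposition \ref{prop:kernelflat}, and both observe that the Reeb-dependent terms of $\gamma_H$ vanish on $\C$ (since $i(Y)\tau = i(Y)\eta = 0$), leaving $\gamma_H(Y) = \Lie_Y H$. The only cosmetic difference is that you organize the argument as one chain of equivalences via the double-annihilator identity $\flat(\T_pM) = (\Ima\flat_p)^{\circ\circ} = \C_p^\circ$, whereas the paper proves the two inclusions separately.
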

\begin{proof}
    Let $p\in M_1$. Then $(\gamma_H)_p \in\flat(\T_pM) = (\ker\flat)_p^\circ = \C^\circ_p$. For any $Y\in\C$ we have that
    $$ 0 = (i_Y\gamma_H)_p = \big(i_Y(\d H-\left(\Lie_{R_s}H+H\right)\eta+\left(1-\Lie_{R_t}H\right)\tau)\big)_p = (\Lie_Y H)_p\,, $$
    because $i_Y\tau = i_Y\eta = 0$. Hence, $p\in \bar M_1$.
    
    Conversely, consider $p\in\bar M_1$ and $Y\in\C$. Then, $(i_Y\d H)_p = 0$, which implies $(i_Y\gamma_H)_p = 0$. Thus $(\gamma_H)_p\in(\C)^\circ_p = \flat(\T_pM)$. Since $Y$ is arbitrary, we have that $p\in M_1$.
\end{proof}

\begin{cor}\label{cor:independent-Reeb}
    Over $M_1$, the one-form $\gamma_H$ does not depend on the choice of the Reeb vector fields.
\end{cor}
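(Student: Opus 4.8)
The plan is to exploit the characterization of $M_1$ just established, namely $M_1 = \bar M_1 = \{p\in M\mid (\Lie_Y H)_p = 0,\ \forall Y\in\C\}$, together with the description of the ambiguity in the Reeb vector fields recorded right after Proposition \ref{prop:kernelflat}. First I would isolate exactly where the choice of Reeb fields enters the expression
$$ \gamma_H = \d H-\left(\Lie_{R_s}H+H\right)\eta+\left(1-\Lie_{R_t}H\right)\tau\,. $$
The forms $\d H$, $\eta$, $\tau$ and the function $H$ are all intrinsic; the only pieces depending on the chosen Reeb fields are the scalar coefficients $\Lie_{R_s}H$ and $\Lie_{R_t}H$. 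Hence it suffices to show that these two functions, restricted to $M_1$, are independent of the choice.

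Next I would take two contact Reeb vector fields $R_s,R_s'$ and two time Reeb vector fields $R_t,R_t'$. By the remark following Proposition \ref{prop:kernelflat}, their differences lie in the characteristic distribution, that is $R_s'-R_s = V_s\in\C$ and $R_t'-R_t = V_t\in\C$. Then, by linearity of the Lie derivative in the vector field argument and since $H$ is a function,
$$ \Lie_{R_s'}H - \Lie_{R_s}H = \Lie_{V_s}H = i_{V_s}\d H\,,\qquad \Lie_{R_t'}H - \Lie_{R_t}H = \Lie_{V_t}H = i_{V_t}\d H\,. $$
Evaluating at a point $p\in M_1 = \bar M_1$ and using that $V_s,V_t\in\C$, both right-hand sides vanish. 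Therefore $\Lie_{R_s'}H = \Lie_{R_s}H$ and $\Lie_{R_t'}H = \Lie_{R_t}H$ on $M_1$, which forces the two expressions for $\gamma_H$ to agree on $M_1$.

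There is no genuine obstacle here: the corollary is an immediate consequence of the preceding theorem, and the whole content of the argument is the two-line computation above. The only point deserving care is to invoke both ingredients \emph{pointwise} on $M_1$ rather than globally, since $\Lie_Y H$ need not vanish away from $M_1$; it is precisely the equality $M_1=\bar M_1$ that guarantees $i_{V}\d H$ vanishes there for every $V\in\C$.
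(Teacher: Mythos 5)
Your proof is correct and coincides in substance with the paper's own argument: both reduce the statement to the fact that the difference of two choices of Reeb vector fields lies in $\C$, so that the discrepancy $(\Lie_{\bar R_t - R_t}H)\tau + (\Lie_{\bar R_s - R_s}H)\eta$ vanishes on $M_1 = \bar M_1$ by the preceding theorem. Your write-up merely makes explicit the isolation of the coefficients $\Lie_{R_s}H$, $\Lie_{R_t}H$ and the pointwise nature of the argument, which the paper leaves implicit.
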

\begin{proof}
    Let $\gamma_H$ as above and $\bar\gamma_H$ for $\bar R_t,\bar R_s$. Then, $\gamma_H - \bar \gamma_H = (\Lie_{(\bar R_t - R_t)}H)\tau + (\Lie_{(\bar R_s -  R_s)}H)\eta$. Since $\bar R_t - R_t, \bar R_s- R_s\in\C$, we have that $\gamma_H = \bar\gamma_H$ on $M_1$.
\end{proof}

Notice that on $M_1$, the precocontact Hamiltonian equation for the precocontact system $(M,\tau,\eta,H)$ does not depend on the choice of the Reeb vector fields $R_t,R_s$.
\bigskip



In $M_1$ (which we assume to be a submanifold) there exists a solution $X_H$ of \eqref{eq:Ham-eq-precocontact-vectorfields}, but it may not be tangent to $M_1$. Therefore, we define
$$
	M_2=\{p\in M_1 \mid (X_H)_p\in(T_pM_1)\}\,,
$$
which we also assume to be a submanifold. Iterating this procedure we can obtain a sequence of constraint submanifolds 
$$
	\cdots\hookrightarrow M_f\hookrightarrow\cdots\hookrightarrow M_2\hookrightarrow M_1\hookrightarrow M\,.
$$
If this procedure stabilizes, that is, there exists a natural number $f\in\mathbb{N}$ such that $M_{f+1}=M_f$, and $\dim M_f>0$ we say that $M_f$ is the \textbf{final constraint submanifold}. In $M_f$ we can find solutions to equations \eqref{eq:Ham-eq-precocontact-vectorfields} which are tangent to $M_f$. Notice that, in the Lagrangian case, we need to impose the {\sc sode} condition.

In general, the Reeb vector fields are not tangent to the submanifolds provided by this constraint algorithm. One can continue the algorithm by demanding the tangency of the Reeb vector fields, as shown in \cite{DeLeo2019}. 
This is important if we want a Dirac--Jacobi bracket on the resulting constraint submanifold, which requires a Reeb vector field.

\subsection{Precocontact Lagrangian systems}

Consider a Lagrangian function $\L:\R\times \T Q\times\R\to \R$, and the associated objects $\eta_\L$ and $E_\L$ as defined in Section \ref{sec:cocontact-lagrangian-systems}.
\begin{dfn}
    A Lagrangian function $\L:\R\times \T Q\times\R\to \R$ is \textbf{admissible} if the Hessian of $\L$ has constant rank and $(\R\times \T Q\times\R,\d t,\eta_\L)$ is a precocontact manifold.
\end{dfn}
\begin{rmrk}\rm
Not every Lagrangian whose Hessian has constant rank leads to a precocontact structure. Consider the manifold $\R\times \T \R\times\R$, with coordinates $(t,q,v,s)$, and the Lagrangian $\L=vs$. The Hessian of $\L$ has constant rank equal to $0$, and the $1$-forms are $\tau=\d t$ and $\eta_\L=\d s-s\d q$. We have that $\d\eta_\L=-\d s\wedge\d q$, thus $\cl(\tau,\eta_\L) = 4 - 1=3$, which is odd (in a set where $s\neq0$) . The main problem of this structure is that there are no Reeb vector fields as defined in \ref{dfn:ReebSing}. \end{rmrk}
\begin{rmrk}\rm
The condition of being admissible is stronger than the condition proposed in \cite{DeLeo2019}, which is just for the Hessian to have constant rank. In light of the previous example, which can also be considered in the precontact setting, we believe it is necessary to require that the Lagrangian is admissible. Non-admissible Lagrangians will be studied in a future work.
\end{rmrk}

The precocontact Herglotz--Euler--Lagrange equations are defined as in Definition \ref{dfn:Euler-Lagrange-equations}, but there is no result about the solutions like Theorem \ref{thm:regular-lagrangian}. First of all, since we are dealing with a precocontact system, a constraint algorithm is required in order to find solutions. In particular, on the final constraint submanifold, the Herglotz--Euler--Lagrange equations do not depend on the choice of the Reeb vector fields, as proved in Corollary \ref{cor:independent-Reeb}. Moreover, the holonomy condition is not always recovered and it has to be imposed, leading to new constraints which have to be considered during the constraint algorithm.

\subsection{The canonical Hamiltonian formalism}

In the (hyper)regular case, the Legendre transform gives a diffeomorphism between $(\R\times\T Q\times\R, \d t, \eta_\L)$ and $(\R\times\cT Q\times\R, \tau = \d t, \eta)$ such that $\F\L^\ast\eta = \eta_\L$. For the singular case, the Legendre transform can be defined but, in general, $\P := \Ima(\F\L)=\F\L(\R\times\T Q\times\R)\varsubsetneq \R\times\cT Q\times\R$. Some regularity conditions are required to assure that a Hamiltonian precocontact system can be realised on $\P$.

\begin{dfn}
	A singular Lagrangian $\L$ is \textbf{almost-regular} if
	\begin{itemize}
	    \item $\L$ is admisible,
	    \item $\P := \Ima(\F\L)=\F\L(\R\times\T Q\times\R)$ is a closed submanifold of $\R\times\cT Q\times\R$,
	    \item the Legendre map $\F\L$ is a submersion onto its image,
	    \item the fibers $\F\L^{-1}(\F\L(t,v_q,s))\subset \R\times\T Q\times\R$ are connected submanifolds for every $(t,v_q,s)\in \R\times\T Q\times\R$.
	\end{itemize}
\end{dfn}

In the almost-regular case we can construct the triple$(\P, \tau, \eta_\P)$, where $\eta_\P = j_\P^\ast\F\L^\ast\eta\in\Omega^1(\P)$, $\tau=\F\L^\ast\d t$  and $j_\P\colon\P\hookrightarrow\R\times\cT Q\times\R$ is the natural embedding. Furthermore, the Lagrangian energy $E_\L$ is $\F\L$-projectable, that is, there is a unique $H_\P\in\Cinfty(\P)$ such that $E_\L = (j_\P\circ\F\L)^\ast H_\P$. Then, $(\P,\tau, \eta_\P,H_\P) $ is a precocontact Hamiltonian system.

The contact Hamiltonian equations for $X_\P\in\X(\P)$ are \eqref{eq:Ham-eq-cocontact-vectorfields} adapted to this situation. As in the Lagrangian formalism, these equations are not necessarily consistent everywhere on $\P$ and we must implement a suitable \textit{constraint algorithm} in order to find a \textit{final constraint submanifold} $\P_f\hookrightarrow\P$ (if it exists) where there exist vector fields $X\in\X(\P)$, tangent to $\P_f$, which are (not necessarily unique) solutions to \eqref{eq:Ham-eq-cocontact-vectorfields} on $\P_f$.

\section{Damped mechanical systems with holonomic constraints}

There are two classes of constraints: \textbf{holonomic}, which only depend on the generalized coordinates and time, and \textbf{nonholonomic}, which have a dependence on velocities or momenta. Nonholonomic constraints are more intricate and there are several ways to implement them (see \cite{Gra2003} for a general study of these kind of constraints in the symplectic setting).

In a recent article \cite{DeLeo2021d} the authors consider contact systems with nonholonomic constraints and show how a contact system can be understood as a symplectic system with nonholonomic constraints. This chapter adds to these results by considering singular Lagrangians and allowing explicit dependence on time of both the Lagrangian and the constraints. On the other hand, we only consider holonomic constraints.

Consider a Lagrangian precocontact system $(\R\times\T Q\times\R,{\L'})$, with $\tau=\d t$. We will denote by $\eta_{\L'}$, $\C'$ and $E_{\L'}$ the contact form, characteristic distribution and Lagrangian energy associated to the system $(\R\times\T Q\times \R,\L')$. Now we add a set of independent constraints $f^\alpha(t,q^i,s)=0$ with $1\leq\alpha\leq d$, which define a submanifold $S\hookrightarrow \R\times\T Q\times\R$. In order to define a suitable constraint submanifold, these constraints functions must verify the condition
$$ \rk\left(\dparder{f^\alpha}{q^i}\right) = d\,,\quad \forall t,s\,. $$
Therefore, we can understand these constraints as a constraint in the generalized coordinates $(q^i, s)$ for every value of time $t$.

In order to find the precocontact Lagrangian vector field for $(S,\L)$, we add $d$ new configuration variables $\lambda_\alpha$, thus the enlarged manifold is $\R\times\T (Q\times\mathbb{R}^d)\times\R$, with local variables $(t,q^i,v^i,\lambda_\alpha,v_\alpha,s)$. We consider the new Lagrangian:
\[
\L=\L'+\lambda_\alpha f^\alpha\,.
\]
thus $\lambda_\alpha$ act as Lagrange multipliers and we consider them dynamical variables.

We have the precocontact Lagrangian system $(\R\times\T (Q\times\mathbb{R}^d)\times\R,\L)$. The contact form is the same $\eta_\L=\eta_{\L'}\,$, and the characteristic distribution is
\[
\C=\C'\cup\left<\frac{\partial}{\partial \lambda_\alpha},\frac{\partial}{\partial v_\alpha}\right>\,.
\]
And the Lagrangian energy  is
\[
E_{\L}=E_{\L'}-\lambda_\alpha f^\alpha\,.
\]
The primary constraints are generated by sections of $\C$, which we can separate in two classes, those which are sections of $\C'$ and $\dparder{}{\lambda_\alpha}$ and $\dparder{}{v_\alpha}$. The last ones give the constraints:
\[
    \Lie_{\tparder{}{\lambda_\alpha}}E_\L=-f^\alpha=0\,;\quad \Lie_{\tparder{}{v_\alpha}}E_\L =0\,.
\]
Thus, we recover the original constraints from the constraint algorithm. If $Y$ is a section of $\C'$, then
\[
    \Lie_YE_{\L}=\Lie_YE_{\L'}-\lambda_\alpha \Lie_Yf^\alpha=0\,.
\]
The primary constraints for the Lagrangian $\L'$ (that is, without imposing $f^\alpha=0$) are $\Lie_YE_{\L'}$. Thus, these constraints become coupled with $f^\alpha$ and are not conserved in general. From this point the constraints algorithm should continue imposing the tangency condition. The outcome will depend on the particular Lagrangian and constraints.

We will now compute the dynamical equations, which are complementary to the constraint algorithm. Consider a holonomic vector field:
$$ X = f\parder{}{t} + v^i\parder{}{x^i} + v_\alpha\parder{}{\lambda_\alpha} + G^i\parder{}{v^i} + G_\alpha\parder{}{v_\alpha} + g\parder{}{s}\,. $$
Then the precocontact equations \eqref{eq:Ham-eq-precocontact-vectorfields} are
\begin{align*}
    f &= 1\,,\\
    g &= \L\,,\\
    f^\alpha&=0\,,\\
    \parderr{\L'}{t}{v^i} + v^j\parderr{\L'}{q^j}{v^i} + G^j\parderr{\L'}{v^j}{v^i} + \L'\parderr{\L'}{s}{v^i} - \parder{\L'}{q^i}&=\parder{\L'}{s}\parder{\L'}{v^i}+\lambda_\alpha\left(\parder{f^\alpha}{q^i}+\parder{f^\alpha}{s}\parder{\L'}{v^i}\right)\,.
\end{align*}

\section{Examples}

In the following examples we consider mechanical systems in Riemannian manifolds described by Lagrangians of the form $L = K - U$, with $K$ and $U$ are the kinetic and potential energy respectively \cite{Abr1978}. These Lagrangians are hyperregular. In order to introduce dissipation and external forces, we will add some additional terms to these Lagrangians such as $\gamma s$ and $G(t, q)$ respectively. We also consider systems subjected to holonomic time-dependent constraints $f^\alpha(t,q)$.

\subsection{Damped forced harmonic oscillator}

In this example we are going to study a forced harmonic oscillator with damping. We will develop both the Lagrangian and Hamiltonian formalisms. Consider a harmonic oscillator of mass $m$ with elastic constant $k$ and an external force $f(t)$ depending on time.

\subsubsection*{Lagrangian formalism}

The configuration manifold for this system is $Q=\R$ equipped with coordinate $q$. Consider the phase manifold $\R\times\T Q\times\R$ equipped with coordinates $(t, q, v, s)$ and the Lagrangian function $\L\colon\R\times\T Q\times\R\to\R$ given by
\begin{equation}\label{eq:Lagrangian-damped-forced-oscillator}
    \L(t,q,v,s) = \frac{1}{2}mv^2 - \frac{k}{2}q^2 + q f(t) - \frac{\gamma}{m} s\,,
\end{equation}
where $f(t)$ is a time-dependent external force. The Lagrangian energy associated to this Lagrangian function is
$$ E_\L = \frac{1}{2}mv^2 + \frac{k}{2}q^2 - qf(t) + \frac{\gamma}{m} s\,, $$
and its differential is
$$ \d E_\L = -qf'(t)\d t + (kq - f(t))\d q + mv\d v + \frac{\gamma}{m}\d s\,. $$

\begin{figure}
    \centering
    \begin{subfigure}[h]{0.48\textwidth}
        \centering
        \includegraphics[width=\textwidth]{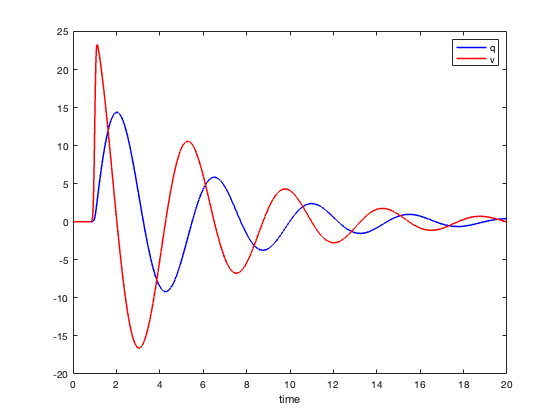}
        \caption{Position and velocity of with respect to time}
        \label{fig:oscillator-pos-vel}
    \end{subfigure}
    \hfill
    \begin{subfigure}[h]{0.48\textwidth}
        \centering
        \includegraphics[width=\textwidth]{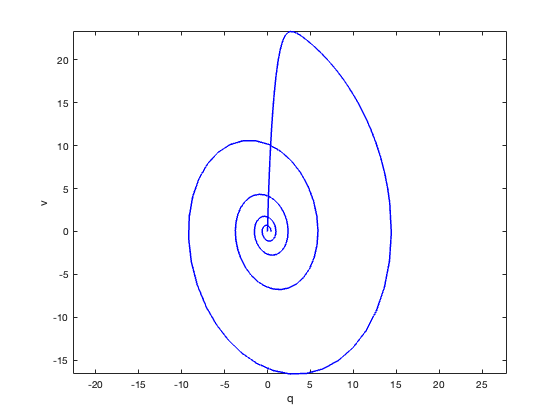}
        \caption{Phase portrait of the oscillator}
        \label{fig:oscillator-phase}
    \end{subfigure}
    \caption{These figures depict the evolution of the position and velocity of the damped harmonic oscillator with respect to time and the phase portrait of the system.}
    \label{fig:oscillator}
\end{figure}

The Cartan 1-form for the Lagrangian $\L$ is $ \theta_\L = mv\d q$. The contact 1-form is $\eta_\L = \d s - mv\d q$, and its differential is $\d\eta_\L = m\d q\wedge\d v$. The Reeb vector fields are:
$$ R_t^\L = \parder{}{t}\ ,\quad R_s^\L = \parder{}{s}\,. $$
Consider a vector field $X\in\X(\R\times\T Q\times\R)$ with local expression
$$ X = a\parder{}{t} + F\parder{}{q} + G\parder{}{v} + g\parder{}{s}\,. $$
The cocontact Lagrangian equations \eqref{eq:Euler-Lagrange-field} for this vector field yield the conditions
\begin{equation*}
    \begin{dcases}
        F = m\,,\\
        G = -\frac{k}{m}q + \frac{f(t)}{m} - \frac{\gamma}{m} v\,,\\
        g = \L\,,\\
        a = 1\,.
    \end{dcases}
\end{equation*}
Hence, the vector field $X$ is
$$ X = \parder{}{t} + v\parder{}{q} + \left( -\frac{k}{m}q + \frac{f(t)}{m} - \gamma v \right)\parder{}{v} + \L\parder{}{s}\,. $$
Its integral curves $(t(r),q(r),v(r),s(r))$ satisfy the system of differential equations
\begin{equation*}
    \begin{dcases}
        \dot t = 1\,,\\
        m\ddot q + \gamma \dot q + kq = f(t)\,,\\
        \dot s = \L\,.
    \end{dcases}
\end{equation*}

In Figure \ref{fig:oscillator-pos-vel} we see the evolution with respect to time of the position and the velocity of the damped oscillator taking as external force a smooth pulse at $t = 1$. We can see the damping of the position and the velocity. In Figure \ref{fig:oscillator-phase} we have represented the phase portrait of the same solution where we can see the initial pulse and how the system decays to the equilibrium point due to the friction.

In Figure \ref{fig:oscillator-energy} we can see the dissipation of both the Lagrangian energy and the mechanical energy given by
$$ E_m = \frac{1}{2}m v^2 + \frac{1}{2}kq^2\,. $$
Notice that the Lagrangian energy decays exponentially, while the mechanical energy follows the evolution of the Lagrangian energy but oscillating around it.

\begin{figure}[h]
    \centering
    \includegraphics[width=0.7\textwidth]{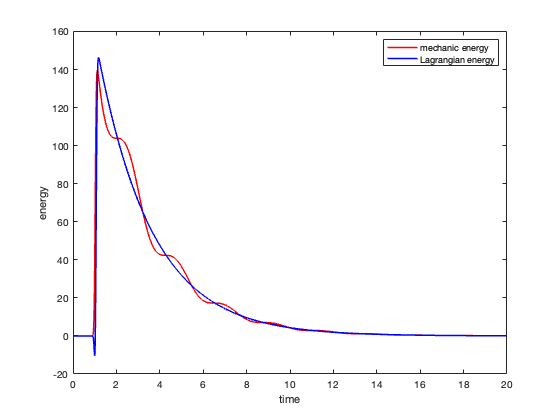}
    \caption{Evolution of the mechanical energy (red) and the Lagrangian energy (blue)}
    \label{fig:oscillator-energy}
\end{figure}

\subsubsection*{Hamiltonian formalism}

Consider the Legendre map associated to the Lagrangian function \eqref{eq:Lagrangian-damped-forced-oscillator}:
$$ \F\L\colon\R\times\T Q\times\R\to\R\times\cT Q\times\R\,,$$
which is given by
$$ \F\L(t,q,v,s) = \left(t,q,p\equiv mv,s\right)\,. $$
Notice that the Legendre map $\F\L$ is a global diffeomorphism and hence $\L$ is an hyperregular Lagrangian.

Then, $\R\times\cT Q\times\R$ is equipped with the cocontact structure $(\d t, \d s - p\d q)$. The Hamiltonian function $H$ such that $\F\L^\ast H = E_\L$ is
$$ H(t,q,p,s) = \frac{p^2}{2m} + \frac{k}{2}q^2 - qf(t) + \frac{\gamma}{m} s\,. $$
Then, a vector field $Y\in\X(\R\times\cT Q\times\R)$ is a solution to Hamilton's equation \eqref{eq:Ham-eq-cocontact-vectorfields} if it has local expression
\begin{equation*}
    Y = \parder{}{t} + \frac{p}{m}\parder{}{q} + \left( -kq + f(t) - \frac{p}{m}\gamma \right)\parder{}{p} + \left( \frac{p^2}{2m} - \frac{k}{2}q^2 + qf(t) - \frac{\gamma}{m} s \right)\parder{}{s}\,.
\end{equation*}
Its integral curves $(t(r),q(r),p(r),s(r))$ satisfy
\begin{equation*}
    \begin{dcases}
        \dot t = 1\,,\\
        \dot q = \frac{p}{m}\,,\\
        \dot p = -kq + f(t) - \frac{p}{m}\gamma\,,\\
        \dot s = \frac{p^2}{2m} - \frac{k}{2}q^2 + qf(t) - \frac{\gamma}{m} s\,.
    \end{dcases}
\end{equation*}
Combining the second and the third equations above, we obtain the second-order differential equation
$$ m\ddot q + \gamma \dot q + kq = f(t) \,. $$

\subsection{A time-dependent system with central force and friction}

Consider the Kepler problem in the case where the mass of the particle subjected to the central force is a non-vanishing function of time $m(t)$. It is clear that the motion of the particle is on a plane and hence the configuration manifold is $Q=\R^2$ endowed with coordinates $(r,\varphi)$.

The phase bundle $\R\times\cT Q\times\R$ with coordinates $(t, r, \varphi, p_r, p_\varphi, s)$ has a natural cocontact structure given by the 1-forms $\tau = \d t$ and $\eta = \d s - p_r\d r - p_\varphi\d\varphi$. The Reeb vector fields are
$$ R_t = \parder{}{t}\ ,\quad R_s = \parder{}{s}\,. $$
Consider the Hamiltonian function $H\in\Cinfty(\R\times\cT Q\times\R)$ given by
$$ H(t, r, \varphi, p_r, p_\varphi, s) = \frac{p_r^2}{2m(t)} + \frac{p_\varphi^2}{2m(t)r^2} + \frac{k}{r} + \gamma s\,. $$
The vector field $X\in\X(\R\times\cT Q\times\R)$ satisfying equations \eqref{eq:Ham-eq-cocontact-vectorfields} has local expression
\begin{multline*}
    X = \parder{}{t} + \frac{p_r}{m(t)}\parder{}{r} + \frac{p_\varphi}{m(t)r^2}\parder{}{\varphi} + \left( \frac{p_\varphi^2}{m(t)r^3} + \frac{k}{r^2} - \gamma p_r \right)\parder{}{p_r}\\
    - \gamma p_\varphi\parder{}{p_\varphi} + \left( \frac{p_r^2}{2m(t)} + \frac{p_\varphi^2}{2m(t)r^2} - \frac{k}{r} - \gamma s \right)\parder{}{s}\,.
\end{multline*}
Then, the integral curves $(t, r, \varphi, p_r, p_\varphi, s)$ satisfy
\begin{equation*}
    \begin{dcases}
        \dot t = 1\,,\\
        m(t)\dot r = p_r\,,\\
        m(t)r^2\dot\varphi = p_\varphi\,,\\
        \dot p_r = \frac{p_\varphi^2}{m(t)r^3} + \frac{k}{r^2} - \gamma p_r\,,\\
        \dot p_\varphi = -\gamma p_\varphi\,,\\
        \dot s = \frac{p_r^2}{2m(t)} + \frac{p_\varphi^2}{2m(t)r^2} - \frac{k}{r} - \gamma s\,.
    \end{dcases}
\end{equation*}
Hence, the integral curves must fulfill the system of second-order equations
\begin{equation*}
    \begin{dcases}
        \frac{\d}{\d t}\left( m(t)\dot r \right) = m(t)r\dot\varphi^2 + \frac{k}{r^2} - \gamma m(t)\dot r\,,\\
        \frac{\d}{\d t}\left( m(t)r^2\dot\varphi \right) = -\gamma m(t)r^2\dot\varphi\,.
    \end{dcases}
\end{equation*}

\subsection{Damped pendulum with variable length}

Consider a damped pendulum of mass $m$ with variable length $\ell(t)$ \cite{Gra2005}. Its position in the plane can be described using polar coordinates $(r,\theta)$. The constraint $r = \ell(t)$ will be introduced in the Lagrangian function via a Lagrange multiplier. The phase space of this system is the bundle $\R\times\T\R^3\times\R$, equipped with coordinates $(t, r, \theta, \lambda, v_r, v_\theta, v_\lambda, s)$. The Lagrangian function describing this system is
$$ \L = \frac{1}{2}m(v_r^2 + r^2v_\theta^2) - mgr(1-\cos\theta) + \lambda(r-\ell(t)) - \gamma s\in\Cinfty(\R\times\T\R^3\times\R)\,, $$
where $\lambda$ is the Lagrange multiplier.

The contact 1-form is
$$ \eta_\L = \d s - mv_r\d r - mr^2v_\theta\d\theta\,, $$
then
$$ \d\eta_\L = m\d r\wedge\d v_r + 2mrv_\theta\d\theta\wedge\d r + mr^2\d\theta\wedge\d v_\theta\,, $$
and we have the 1-form $\tau = \d t$. Hence, $(\R\times\T\R^3\times\R, \tau,\eta_\L)$ is a precocontact manifold. We can take as Reeb vector fields $\Reeb_s = \tparder{}{s}$, $\Reeb_t = \tparder{}{t}$. The characteristic distribution of $(\tau,\eta_\L)$ is
$$ \C = \ker\tau\cap\ker\eta_\L\cap\ker\d\eta_\L = \left\langle \parder{}{\lambda},\parder{}{v_\lambda} \right\rangle\,. $$

The Lagrangian energy associated to $\L$ is
$$ E_\L = \frac{1}{2}m(v_r^2 + r^2v_\theta^2) + mgr(1-\cos\theta) - \lambda(r-\ell(t)) + \gamma s\,, $$
and thus
\begin{align*}
    \d E_\L - \Reeb_t(E_\L)\d t - \Reeb_s(E_\L)\eta &= mv_r\d v_r + mr^2v_\theta\d v_\theta + \big(mrv_\theta^2 + mg(1-\cos\theta) - \lambda + \gamma m v_r\big)\d r \\&\quad + \big( mgr\sin\theta + \gamma mr^2 v_\theta \big)\d\theta - (r - \ell(t))\d\lambda\,.
\end{align*}

Consider a \textsc{sode} $X\in\X(\R\times\T\R^3\times\R)$ with local expression
$$ X = f\parder{}{t} + v_r\parder{}{r} + v_\theta\parder{}{\theta} + v_\lambda\parder{}{\lambda} + G_r\parder{}{v_r} + G_\theta\parder{}{v_\theta} + G_\lambda\parder{}{v_\lambda} + g\parder{}{s}\,. $$

\begin{figure}[h]
    \centering
    \begin{subfigure}{0.48\textwidth}
        \centering
        \includegraphics[width=\textwidth]{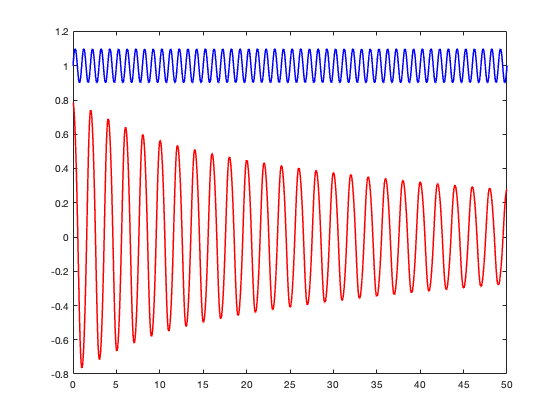}
        \caption{Radius (blue) and angle $\theta$ (red) of with respect to time}
        \label{fig:pendulum1-g05}
    \end{subfigure}
    \hfill
    \begin{subfigure}{0.48\textwidth}
        \centering
        \includegraphics[width=\textwidth]{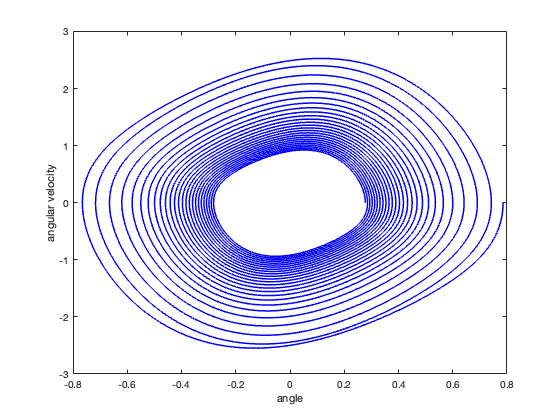}
        \caption{Phase portrait of the pendulum ($\theta$ and $\dot\theta$)}
        \label{fig:pendulum2-g05}
    \end{subfigure}
    \hfill
    \begin{subfigure}{0.48\textwidth}
        \centering
        \includegraphics[width=\textwidth]{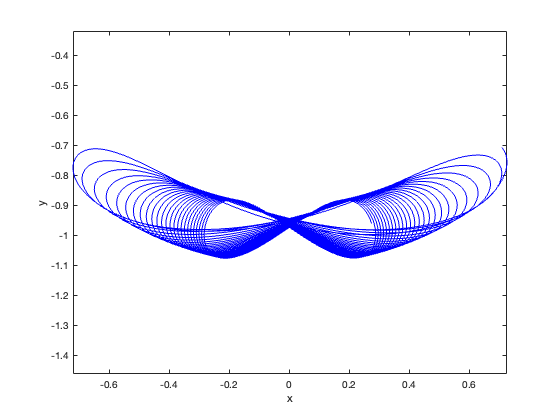}
        \caption{Trajectory of the pendulum in the plane $XY$}
        \label{fig:pendulum3-g05}
    \end{subfigure}
    \hfill
    \begin{subfigure}{0.48\textwidth}
        \centering
        \includegraphics[width=\textwidth]{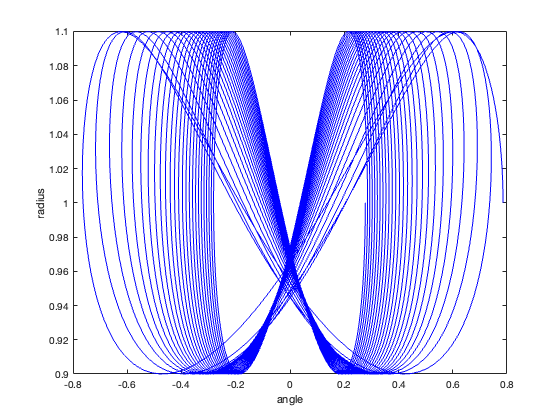}
        \caption{Radius with respect to the angle $\theta$}
        \label{fig:pendulum4-g05}
    \end{subfigure}
    \caption{Plots of the pendulum with friction coefficient $\gamma=0.5$ and $\ell(t) = 1 + 0.1\sin(2\pi t)$}
    \label{fig:pendulum-g05}
\end{figure}

The dynamical equations for the vector field $X$ yield the conditions
\begin{gather*}
    f = 1\,,\quad v_r = v_r\,,\quad v_\theta = v_\theta\,,\quad r - \ell(t) = 0\,,\quad g = \L\,,\\
    G_r = rv_\theta^2 - g(1-\cos\theta) + \frac{\lambda}{m} - \gamma v_r\,,\quad G_\theta = -\frac{2}{r}v_r v_\theta - \frac{g}{r}\sin\theta - \gamma v_\theta
\end{gather*}
and we obtain the constraint function
$$ \xi_1 = r - \ell(t)\,, $$
defining the first constraint submanifold $M_1\hookrightarrow \R\times\T\R^3\times\R$.

Hence, the vector field $X$ has the form
\begin{align*}
    X &= \parder{}{t} + v_r\parder{}{r} + v_\theta\parder{}{\theta} + v_\lambda\parder{}{\lambda} + \left(rv_\theta^2 - g(1-\cos\theta) + \frac{\lambda}{m} - \gamma v_r\right)\parder{}{v_r}\\
    &\quad + \left( -\frac{2}{r}v_r v_\theta - \frac{g}{r}\sin\theta - \gamma v_\theta \right)\parder{}{v_\theta} + G_\lambda\parder{}{v_\lambda} + \L\parder{}{s}\qquad \text{(on $M_1$)}\,.
\end{align*}

\begin{figure}[h]
    \centering
    \begin{subfigure}{0.48\textwidth}
        \centering
        \includegraphics[width=\textwidth]{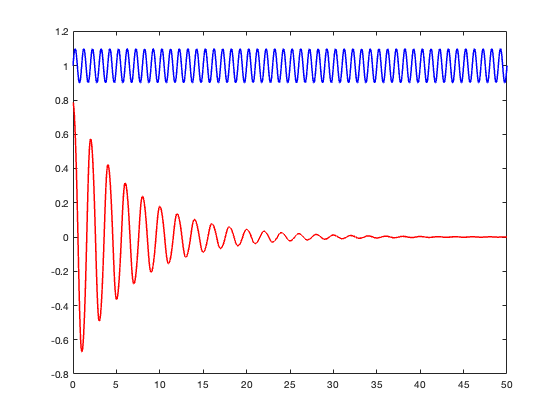}
        \caption{Radius (blue) and angle $\theta$ (red) with respect to time}
        \label{fig:pendulum1-g075}
    \end{subfigure}
    \hfill
    \begin{subfigure}{0.48\textwidth}
        \centering
        \includegraphics[width=\textwidth]{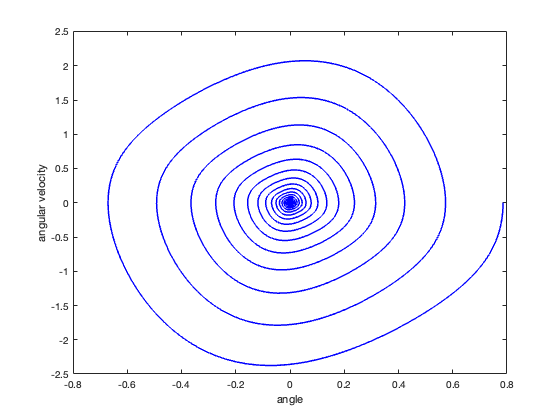}
        \caption{Phase portrait of the pendulum ($\theta$ and $\dot\theta$)}
        \label{fig:pendulum2-g075}
    \end{subfigure}
    \hfill
    \begin{subfigure}{0.48\textwidth}
        \centering
        \includegraphics[width=\textwidth]{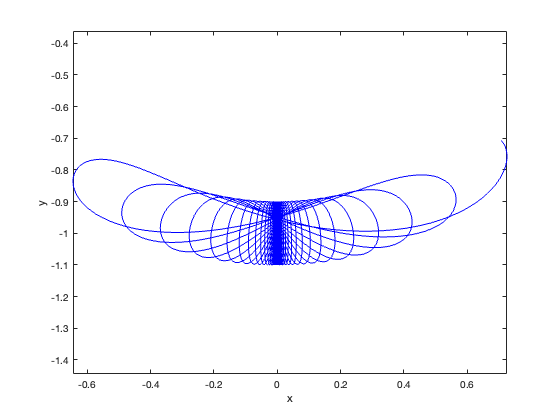}
        \caption{Trajectory of the pendulum in the plane $XY$}
        \label{fig:pendulum3-g075}
    \end{subfigure}
    \hfill
    \begin{subfigure}{0.48\textwidth}
        \centering
        \includegraphics[width=\textwidth]{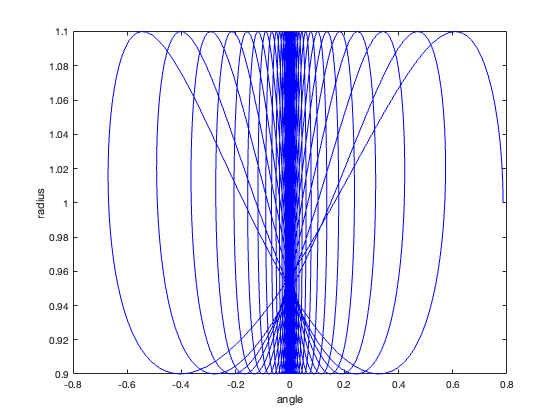}
        \caption{Radius with respect to the angle $\theta$}
        \label{fig:pendulum4-g075}
    \end{subfigure}
    \caption{Plots of the pendulum with friction coefficient $\gamma=0.75$ and $\ell(t) = 1 + 0.1\sin(2\pi t)$}
    \label{fig:pendulum-g075}
\end{figure}

Imposing the tangency of the vector field $X$ to the submanifold $M_1$, namely the condition $\Lie_X \xi_1 = 0$, we obtain the constraint
$$ \xi_2 = v_r - \ell'(t) = 0\qquad \text{(on $M_1$)}\,, $$
defining a new constraint submanifold $M_2\hookrightarrow M_1$. The tangency condition $\Lie_X \xi_2 = 0$ of the vector field $X$ to the submanifold $M_2$ yields the new constraint function
$$ \xi_3 = mr v_\theta^2 - g(1-\cos\theta) + \lambda - m\gamma \ell'(t) - \ell''(t)\qquad \text{(on $M_2$)}\,, $$
defining a new constraint submanifold $M_3\hookrightarrow M_2$, and we also get $G_r = \ell''(t)$. Imposing again the tangency condition we obtain a new constraint function
$$ \xi_4 = \Lie_X \xi_3 = v_\lambda - 3m\ell'(t) v_\theta^2 - 2m\ell(t)\gamma v_\theta^2 - mg v_\theta\sin\theta - m\gamma\ell''(t) - m\ell'''(t)\qquad \text{(on $M_3$)}\,, $$
defining the submanifold $M_4\hookrightarrow M_3$. Requiring $X$ to be tangent to $M_4$ we determine the last coefficient $G_\lambda$, whose expression we will omit, and no new constraints appear. Thus, there is a unique vector field solution to equations \eqref{eq:Euler-Lagrange-field} and has local expression
\begin{align*}
    X &= \parder{}{t} + \ell'(t)\parder{}{r} + v_\theta\parder{}{\theta} + \left( 3m\ell'(t) v_\theta^2 + 2m\ell(t)\gamma v_\theta^2 + mg v_\theta\sin\theta + m\gamma\ell''(t) + m\ell'''(t) \right)\parder{}{\lambda} \\
    &\quad + \ell''(t)\parder{}{v_r} + \left( -\frac{2}{r}v_r v_\theta - \frac{g}{r}\sin\theta - \gamma v_\theta \right)\parder{}{v_\theta} + G_\lambda\parder{}{v_\lambda} + \L\parder{}{s}\qquad \text{(on $M_4$)}\,.
\end{align*}

The integral curves of this vector field satisfy the following second-order differential equation
\begin{equation}\label{eq:damped-pendulum}
    \ddot\theta = -\gamma\dot\theta - 2\frac{\ell'(t)}{\ell(t)}\dot\theta - \frac{g}{\ell(t)}\sin\theta \,.
\end{equation}
Notice that if we consider a pendulum with fixed length $\ell(t) = \ell_\circ$, we recover the usual equation for a damped pendulum:
$$ \ddot\theta = -\gamma\dot\theta - \frac{g}{\ell_\circ}\sin\theta \,. $$
On the other hand, setting $\gamma=0$ in equation \eqref{eq:damped-pendulum}, we obtain the equation of the simple pendulum with variable length studied in \cite{Gra2005}.

In Figures \ref{fig:pendulum-g05} and \ref{fig:pendulum-g075} we have represented a couple of simulations of a damped pendulum of mass $m=1$ with variable length considering $\ell(t) = 1 + 0.1\sin(2\pi t)$, friction coefficients $\gamma = 0.5$ and $\gamma = 0.75$ respectively, and initial conditions $\theta(0) = \pi/4$ and $\dot\theta(0) = 0$. We have plotted the evolution of the radial and angular coordinates and we can see the loss of amplitude, and hence of energy, of the system. Notice that in this example, the energy does not tend to zero, but to a positive constant since the radial coordinates keeps oscillating forever.

\section{Conclusions and further research}

In this paper we have introduced a geometrical formulation for time-dependent contact systems by defining a new geometric structure: cocontact manifolds. This new notion combines the well-known contact and cosymplectic structures. We have also proved that cocontact manifolds are Jacobi manifolds and defined and characterize the notions of isotropic and Legendrian submanifolds.

This geometrical setting allows us to develop the Hamiltonian and Lagrangian formalisms for time-dependent contact systems, generalizing those for contact systems \cite{Bra2017b,DeLeo2019b,Gas2019} and cosymplectic systems \cite{DeLeo2017}. In addition, we have studied the problem where the system is defined by a singular Lagrangian, thus introducing the notion of precocontact structure. This is useful since many systems are defined by singular Lagrangians. As an application, we have studied the particular case of cocontact systems with time-dependent holonomic constraints.

We have worked out two regular examples: the damped forced harmonic oscillator and the Kepler problem with non-constant mass and friction; and a singular one: a pendulum with variable length and friction. This last example is singular because we have introduced the constraint with a Lagrange multiplier and the constraint algorithm gives back the constraint. Computer simulations of some of these examples have been included.

The structures introduced in this paper could be used to improve our understanding of time-dependent dissipative systems. For instance, providing new geometric integrators \cite{Ver2019, Bra2020b, Zadra2020, AnahorySimoes2021} from the discretization of the obtained equations, discussing symmetries and their associated dissipated and conserved quantities, and studying reduction procedures such as coisotropic reduction \cite{Abr1978,DeLeo2019b} and Marsden--Weinstein reduction \cite{Mar1974}. It would be also interesting to state the Hamilton--Jacobi theory for these systems and describe the Skinner--Rusk unified formalism for cocontact systems.

The formulation presented in this work is also a first step towards finding a geometric formalism for non-autonomous dissipative field theories based on the $k$-contact setting \cite{Gas2020,Gas2021,Gra2021} and generalizing the multisymplectic formalism \cite{Car1991,Rom2009}. The $k$-contact formalism allows to describe autonomous field theories, such as field theories with damping, some equations from circuit theory, such as the so-called telegrapher's equation, or the Burgers' equation. Nevertheless, there are many examples of non-autonomous field theories, like Maxwell's equations with a non constant charge density or general relativity with matter sources that require a formulation for non-autonomous field theories.

\section{Acknowledgements}

We acknowledge fruitful discussions and comments from our colleague Narciso Román-Roy. MdL acknowledges the financial support of the Ministerio de Ciencia e Innovación (Spain), under grants PID2019-106715GB-C2, ``Severo Ochoa Programme for Centres of Excellence in R\&D'' (CEX2019-000904-S) and EIN2020-112197, funded by AEI/10.13039/501100011033 and European Union NextGenerationEU/PRTR. JG, XG, MCML and XR acknowledge the financial support of the Ministerio de Ciencia, Innovaci\'on y Universidades (Spain), project PGC2018-098265-B-C33. 

\bibliographystyle{abbrv}
\bibliography{bibliografia.bib}

\end{document}